%
%

%
%

%
%

%
%

\documentclass[reqno]{amsart}     
\usepackage{pstricks,pst-node,pst-coil,pst-plot}

\theoremstyle{plain}
\newtheorem{step}{Step}
\newtheorem{case}{Case}
\newtheorem{thm}{Theorem}[section]
\newtheorem{theorem}[thm]{Theorem}
\newtheorem{cor}[thm]{Corollary}

\newtheorem{lem}[thm]{Lemma}
\newtheorem{lemma}[thm]{Lemma}
\newtheorem{prop}[thm]{Proposition}

\theoremstyle{remark}

\newtheorem{remark}[thm]{Remark}

\theoremstyle{definition}


\def\al{{\alpha}}

\def\si{{\sigma}}

\def\ep{{\epsilon}}

\def\th{{\theta}}

\def\ph{{\varphi}}
\def\phi{{\varphi}}

\DeclareMathAlphabet{\doba}{U}{msb}{m}{n}

\gdef\mR{\doba{R}}

\def\Vol{{\mathop{\rm vol}}}     
\let\vol\Vol
\def\Ric{{\mathop{\rm Ric}}}

\let\tr\trace

\def\divergence{{\mathop{\rm div}}}

\def\gammatil{\widetilde{\gamma}}

\let\<\langle 
\let\>\rangle

\newcommand{\definedas}{\mathrel{\raise.095ex\hbox{\rm :}\mkern-5.2mu=}}

\begin{document}


\title [Solvability of the vacuum Einstein constraint equations] {A 
limit equation associated to the solvability of the vacuum Einstein
  constraint equations using the conformal method}

\author{Mattias Dahl}
\address{Institutionen f\"or Matematik \\
  Kungliga Tekniska H\"ogskolan \\
  100 44 Stockholm \\
  Sweden} \email{dahl@math.kth.se}

\author{Romain Gicquaud}
\address{Laboratoire de Math\'ematiques et de Physique Th\'eorique \\
  UFR Sciences et Technologie \\
  Facult\'e Fran\c cois Rabelais \\
  Parc de Grandmont \\
  37200 Tours \\
  France} \email{romain.gicquaud@lmpt.univ-tours.fr}

\author{Emmanuel Humbert}
\address{Institut \'Elie Cartan, BP 239 \\
  Universit\'e de Nancy 1 \\
  54506 Vandoeuvre-l\`es-Nancy Cedex \\
  France} \email{humbert@iecn.u-nancy.fr}

\begin{abstract}
  Let $(M,g)$ be a compact Riemannian manifold on which a trace-free
  and divergence-free $\sigma \in W^{1,p}$ and a positive function
  $\tau \in W^{1,p}$, $p > n$, are fixed. In this paper, we study the
  vacuum Einstein constraint equations using the well known conformal
  method with data $\sigma$ and $\tau$. We show that if no
  solution exists then there is a non-trivial solution of another
  non-linear limit equation on $1$-forms. This last equation can be
  shown to be without solutions no solution in many situations. As a
  corollary, we get existence of solutions of the vacuum Einstein
  constraint equation under explicit assumptions which in particular
  hold on a dense set of metrics $g$ for the $C^0$-topology.
\end{abstract}

\subjclass[2010]{53C21 (Primary), 35Q75, 53C80, 83C05 (Secondary)}
%
%
%


\date{January 28, 2011}

\keywords{Einstein constraint equations, non-constant mean curvature,
  conformal method.}

\maketitle

\tableofcontents

\section{Introduction}

\subsection{Background}

The Cauchy problem in general relativity asks for a space-time
development of a given initial data set $(M,g,K)$ consisting of an
$n$-dimensional manifold $M$ equipped with a Riemannian metric $g$ and
a symmetric $(0,2)$-tensor $K$. Such a development is a manifold
$\mathcal{M} \definedas M \times \mR$ equipped with a globally
hyperbolic Lorentzian metric $G$ satisfying Einstein's equation
\begin{equation*}
  \Ric^G - \frac{1}{2} R^G G = 8 \pi T.
\end{equation*}
Here $\Ric^G$ and $R^G$ are the Ricci and scalar curvatures of the
metric $G$ and $T$ is the energy-momentum tensor modeling the matter.
For $(\mathcal{M}, G)$ to be a development it is required that $(M,
g)$ embeds isometrically on the slice $M \times \{0 \}$ with second
fundamental form $K$.

The initial data $(M,g,K)$ cannot be freely specified but must itself
satisfy the so-called the {\em Einstein constraint equations}.  In
this paper, we are interested in the {\em Vacuum Einstein constraint
  equations}
\begin{subequations}
  \begin{align}
    R - |K|^2 + (\tr K)^2 &= 0,
    \label{hamiltonian0} \\
    \divergence K - d \tr K &= 0.
    \label{momentum0}
  \end{align}
\end{subequations}
corresponding to the case where the energy-momentum tensor $T$
vanishes everywhere. Y. Choquet-Bruhat and R. Geroch \cite{CBG69}
proved that such initial data give rise to a unique development
$(\mathcal{M},G)$ as described above.

The most efficient method to find initial data satisfying the vacuum
constraint equations is the conformal method developed by A.
Lichnerowicz \cite{Li44} and Y. Choquet-Bruhat and J. W. York
\cite{CBY80}. For this approach we let $(M,g)$ be a Riemannian
manifold, $\tau$ a smooth function and $\sigma$ be a trace-free and
divergence-free symmetric $(0,2)$-tensor on $M$. We consider the
following system of equations for $\phi$ and $W$,
\begin{subequations}
  \begin{align}
\frac{4(n-1)}{n-2} \Delta \phi + R \phi 
&= 
-\frac{n-1}{n} \tau^2 \phi^{N-1} + |\sigma + LW|^2 \phi^{-N-1},
    \label{hamiltonian} \\
-\frac{1}{2} L^* L W 
&= 
\frac{n-1}{n} \phi^N d\tau,
    \label{momentum}
  \end{align}
\end{subequations}
where $\phi$ is a positive smooth and where $W$ is a smooth one-form.
In this system the first equation is called the {\em Lichnerowicz
  equation}, and the second equation is called the {\em vector
  equation}. These coupled equations are together called the {\em
  conformal constraint equations}. We have set $N \definedas
\frac{2n}{n-2}$. Further, $\Delta$ is the non-negative Laplacian
acting on functions, $R$ the scalar curvature, and $L$ the conformal
Killing operator, all defined using the metric $g$. Except when the
notation indicates otherwise all metric-dependent objects on $M$ are
assumed to be defined using the metric $g$. The conformal Killing
operator is defined on a $1$-form $V$ by taking the symmetric,
trace-free part of $\nabla V$. The vector fields corresponding to
$1$-forms in the kernel of $L$ are precisely the conformal Killing
vector fields, to assume that $(M,g)$ has no conformal Killing vector
fields is thus equivalent to assume that $\ker L = 0$.

If $(\phi,W)$ are solutions of the conformal constraint equations
\eqref{hamiltonian}-\eqref{momentum} then
\begin{equation*}
  \left( M, \phi^{N-2} g, 
    \frac{\tau}{n} \phi^{N-2} g + \phi^{-2} (\si + LW) \right)
\end{equation*}
is an initial data set satisfying the vacuum Einstein's constraint
equations \eqref{hamiltonian0}-\eqref{momentum0}, see for example
\cite{BartnikIsenberg} or \cite[Chapter VII]{ChoquetBruhat}.  Note
that the function $\tau$ is then the mean curvature of the embedding
of $(M,g)$ into $(\mathcal{M},G)$.

Let us now describe the results on existence of solutions to the
system \eqref{hamiltonian}-\eqref{momentum} which are known until now.
We restrict to the case of closed manifolds $M$ . The case of constant
functions $\tau$ is completely understood, see \cite{Isenberg}, while
the case of non-constant $\tau$ in its full generality is still open.
All the known existence results require that $d \tau /\tau$ is small
or that $\sigma$ is small. The reader can refer for instance to
\cite{IM92}, \cite{ACI08}, or \cite{MaxwellNonCMC}. In the presence of
matter, then again some smallness of the data is needed, see
\cite{HNT1}. Some results are known for initial data of lower
regularity, see \cite{HNT2}, \cite{Maxwellrough}.

The main result of this paper, Theorem \ref{main}, is that if the
conformal constraint equations \eqref{hamiltonian}-\eqref{momentum}
have no solution, then there exists a non-trivial $1$-form $W$ solving
the equation
\begin{equation} \label{eqLimit} -\frac{1}{2} L^*L W = \al_0
  \sqrt{\frac{n-1}{n}} |LW| \frac{d\tau}{\tau}
\end{equation}
for some $\al_0 \in (0,1]$.  This result has advantages compared to
previous results:
\begin{itemize}
\item In the case that Equation \eqref{eqLimit} has no solution for
  any $\al_0 \in (0,1]$, then we obtain that there exists a solution
  of the conformal constraint equations
  \eqref{hamiltonian}-\eqref{momentum}. We also obtain the compactness
  of the set of solutions.
\item In the case that Equation \eqref{eqLimit} has no solution,
  $g,\tau$ are in the interior of the set of data for which
  \eqref{hamiltonian}-\eqref{momentum} have solutions for any $\sigma$,
  see Proposition \ref{interior_data}.
\item If $d \tau /\tau$ is small enough, then Equation \eqref{eqLimit}
  has no solution. The point here is that the smallness condition is
  explicit, see Corollaries \ref{Cor1} and \ref{Cor2}.
\item We do not need to assume that the initial data $g,\tau,\sigma$
  are smooth.
\item This is maybe the main point: the assumptions required in
  Corollary \ref{Cor1} hold for a dense set of metrics with respect to
  the $C^0$-topology. In other words, let $\tau$ be fixed with
  constant sign. Then, there exists a dense set of metrics with
  respect to the $C^0$-topology such that for all $\sigma$, we get
  solutions of Equations \eqref{hamiltonian}-\eqref{momentum}, see
  Corollary \ref{Cor3}. To our knowledge, this is the first result
  showing that the set of solutions of the conformal constraint
  equations is large. Furthermore, it was not known that for any such
  $\tau$, we can find even one metric giving rise to solutions of the
  conformal constraint equations.
\item Theorem \ref{main} lets us expect some further developments by
  finding other situations where Equation \eqref{eqLimit} has no
  solution.
\end{itemize}

The method we use to prove our main result is to decrease the exponent
$N$ of $\phi$ in the right hand side of the vector equation
\eqref{momentum} to $N-\epsilon$, this idea was introduced to us by J.
Isenberg.  The new equations obtained in this way we call
``subcritical'' (even though the situation here is not related to
criticality of Sobolev embedding as for the Yamabe problem).  At a
certain point of the proof we can find a supersolution for the
subcritical equations regardless of the size of the coefficients in
the equations, this makes it possible to solve the subcritical
equations for all $\ep > 0$. In \eqref{defEnergy} below we define an
``energy'' for solutions to the subcritical equations. It turns out
that the behavior of solutions as $\ep \to 0$ is controlled by this
energy, if it is bounded we get convergence to a solution of the
conformal constraint equations, otherwise certain rescaled solutions
converge to a solution to the limit equation.

There are only few results concerning non-existence of solutions of
the conformal constraint equations. In \cite[Theorem
2]{IsenbergOMurchadha} it is shown that there are no solutions if
$\sigma \equiv 0$ and $d\tau /\tau$ is small enough. In Theorem
\ref{nonexist} we strengthen this result by giving a more explicit
bound on $d \tau / \tau$.

\subsection{Statement of results}

Let $M$ be a compact manifold of dimension $n$, our goal is to find
solutions to the vacuum Einstein constraint equations using the
conformal Method. The given data on $M$ consists of
\begin{equation} \label{givendata}
  \begin{minipage}[c]{11cm} 
    \begin{itemize}
    \item a Riemannian metric $g \in W^{2, p}$,
    \item a function $\tau \in W^{1, p}$,
    \item a symmetric, trace- and divergence-free $(0,2)$-tensor
      $\sigma \in W^{1, p}$,
    \end{itemize}
  \end{minipage}
\end{equation}
with $p > n$, and one is required to find
\begin{equation*}
  \begin{minipage}[c]{11cm} 
    \begin{itemize}
    \item a positive function $\phi \in W^{2, p}$,
    \item a 1-form $W \in W^{2, p}$,
    \end{itemize}
  \end{minipage}
  \tag*{\phantom{(4)}}
\end{equation*}
which satisfy the conformal constraint equations
\eqref{hamiltonian}-\eqref{momentum}.

We use standard notation for function spaces, such as $L^p$, $C^k$,
and Sobolev spaces $W^{k,p}$. It will be clear from the context if the
notation refers to a space of functions on $M$, or a space of sections
of some bundle over $M$. For spaces of functions the subscript $+$ is
used to indicate the subspace of positive functions.

In our existence result we will assume that
\begin{equation} \label{assumptions}
  \begin{minipage}[c]{11cm} 
    \begin{itemize}
    \item $\tau$ vanishes nowhere,
    \item $(M,g)$ has no conformal Killing vector fields,
    \item $\sigma \not\equiv 0$ if $Y(g) \geq 0$.
    \end{itemize}
  \end{minipage}
\end{equation}
Here $Y(g)$ is the Yamabe constant of the conformal class of $g$, that
is
\begin{equation*}
  Y(g) \definedas 
  \inf 
  \frac{\int_M R^{\widetilde{g}} \, dv^{\widetilde{g}}}
  {\Vol^{\widetilde{g}} (M)^{\frac{n-2}{n}}} ,
\end{equation*} 
where the infimum is taken over all $\widetilde{g} \in W^{2,p}$
conformal to $g$. Our main result is the following Theorem.

\begin{theorem} \label{main} Let data be given on $M$ as specified in
  \eqref{givendata} and assume that \eqref{assumptions} holds. Then,
  at least one of following assertions is true.
  \begin{itemize}
  \item The system of equations \eqref{hamiltonian}-\eqref{momentum}
    admits a solution $(\phi,W)$ with $\phi > 0$.  Furthermore, the
    set of solutions $(\phi,W) \in W^{2, p}_+ \times W^{2,p}$ is
    compact.
  \item There exists a non-trivial solution $W \in W^{2, p}$ of the
    equation
    \begin{equation} \label{eqLimit1} -\frac{1}{2} L^*L W = \al_0
      \sqrt{\frac{n-1}{n}} |LW| \frac{d\tau}{\tau}
    \end{equation}
    for some $\al_0 \in (0,1]$.
  \end{itemize}
\end{theorem}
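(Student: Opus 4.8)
The plan is to use the subcritical deformation described in the introduction: for $\ep > 0$ small, replace the exponent $N$ in the right-hand side of the vector equation \eqref{momentum} by $N - \ep$, obtaining a system
\begin{equation*}
  \frac{4(n-1)}{n-2} \Delta \phi + R \phi
  =
  -\frac{n-1}{n} \tau^2 \phi^{N-1} + |\sigma + LW|^2 \phi^{-N-1},
  \qquad
  -\frac{1}{2} L^*LW = \frac{n-1}{n} \phi^{N-\ep} d\tau.
\end{equation*}
First I would show that this subcritical system is solvable for every $\ep \in (0, N-2)$. The point, following Isenberg's suggestion, is that one can construct a global supersolution for the subcritical Lichnerowicz equation independent of the size of the coefficients — roughly because lowering the exponent on $\phi^{N-\ep}$ tames the feedback of $LW$ (which grows like $\phi^{N-\ep}$) against the $\phi^{-N-1}$ term — so that a sub/supersolution or Schauder fixed-point argument on the coupled system goes through without smallness. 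Standard $W^{2,p}$ elliptic regularity for $\Delta$ and for $L^*L$ (using $\ker L = 0$), together with the Sobolev embedding $W^{1,p} \hookrightarrow C^0$ for $p > n$, gives solutions $(\phi_\ep, W_\ep) \in W^{2,p}_+ \times W^{2,p}$.

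Next I would introduce the energy $\int_M |LW_\ep|^2 \, dv^g$ (this is the quantity labelled \eqref{defEnergy} in the paper) and split into two cases. If this energy stays bounded as $\ep \to 0$ along some sequence, then I would bootstrap: boundedness of $LW_\ep$ in $L^2$ plus the vector equation controls $\phi_\ep$ from below away from zero and from above via the Lichnerowicz equation and the maximum principle, which then feeds back to give uniform $W^{2,p}$ bounds on both $\phi_\ep$ and $W_\ep$; passing to a weakly convergent subsequence and using compactness of the embedding $W^{2,p} \hookrightarrow C^1$ yields a limit $(\phi, W)$ solving the original conformal constraint equations \eqref{hamiltonian}-\eqref{momentum}, with $\phi > 0$ by the Harnack/maximum principle. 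The same bootstrap argument applied to any sequence of solutions of the original system (which corresponds to $\ep = 0$) shows the solution set is compact, giving the first alternative.

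The heart of the matter is the second case: the energy $\int_M |LW_\ep|^2 \, dv^g$ is unbounded along every sequence $\ep \to 0$. Here I would rescale. Set $\mu_\ep \definedas \bigl(\int_M |LW_\ep|^2\bigr)^{1/2} \to \infty$ and introduce normalized unknowns, roughly $\widehat{W}_\ep \definedas \mu_\ep^{-1} W_\ep$ and a correspondingly rescaled $\widehat{\phi}_\ep \definedas \mu_\ep^{-\theta} \phi_\ep$ with the exponent $\theta$ chosen so that the rescaled vector equation reads $-\frac12 L^*L \widehat W_\ep = \frac{n-1}{n} \widehat\phi_\ep^{\,N-\ep} d\tau$ with $\|L\widehat W_\ep\|_{L^2} = 1$. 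Examining the Lichnerowicz equation after rescaling, the curvature term $R\phi$ and the $\tau^2 \phi^{N-1}$ term become lower order relative to the dominant balance between $|\sigma + LW|^2 \phi^{-N-1}$ and $\tau^2 \phi^{N-1}$, forcing $\widehat\phi_\ep$ to converge (after extracting a subsequence) to a limit determined pointwise by the algebraic relation $\frac{n-1}{n}\tau^2 \widehat\phi^{\,N-1} = |L\widehat W|^2 \widehat\phi^{\,-N-1}$, i.e. $\widehat\phi^{\,N} = \sqrt{\tfrac{n}{n-1}}\,\frac{|L\widehat W|}{\tau}$ up to constants. Substituting this into the rescaled vector equation produces exactly \eqref{eqLimit1}, with the constant $\al_0 \in (0,1]$ arising as $\al_0 = \lim (\text{something} \le 1)$ — it is $\le 1$ because the dropped terms have a definite sign, and it is $> 0$ because the normalization $\|L\widehat W_\ep\|_{L^2}=1$ prevents total collapse; $W = \lim \widehat W_\ep$ is nontrivial for the same reason, using that $\ker L = 0$ to upgrade the $L^2$ bound on $L\widehat W_\ep$ to a $W^{2,p}$ bound on $\widehat W_\ep$ and hence strong $C^1$ convergence.

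The main obstacle I anticipate is the convergence analysis in this second case: one must show that $\widehat\phi_\ep$ genuinely converges (not merely that it is bounded) and that it converges to the nonnegative root of the limiting algebraic equation rather than to zero on a set of positive measure, which requires careful control — likely an integral estimate obtained by testing the Lichnerowicz equation against suitable powers of $\widehat\phi_\ep$ — ensuring no concentration or vanishing of $\widehat\phi_\ep$. Controlling the error term $\al_0$ and showing it is strictly positive (so that the produced $W$ solves a genuinely nontrivial equation) is the delicate endgame; establishing solvability of the subcritical system uniformly in $\ep$ is technically substantial but, given the supersolution construction, more routine.
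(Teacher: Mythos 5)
Your overall strategy coincides with the paper's: subcritical perturbation of the vector equation, Schauder fixed point for the perturbed system, the energy $\int_M|LW|^2\,dv$ as the dichotomy parameter, convergence to a solution of \eqref{hamiltonian}-\eqref{momentum} when it stays bounded, and a rescaling producing the limit equation when it blows up. However, two points as written do not close. First, your account of $\al_0$ is incorrect: it is not true that $\al_0\le 1$ ``because the dropped terms have a definite sign'', nor does the normalization $\|L\widehat W_\ep\|_{L^2}=1$ by itself force $\al_0>0$. In the paper's normalization $\widetilde{\phi}=\gammatil^{-1/(2N)}\phi$, $\widetilde{W}=\gammatil^{-1/2}W$ the rescaled vector equation carries the leftover factor $\gammatil^{-\ep/(2N)}$, and $\al_0\definedas\lim\gammatil_i^{-\ep_i/(2N)}\in[0,1]$ is a pure defect of scale invariance of the subcritical exponent (it is $\le 1$ simply because $\gammatil_i\ge 1$); the case $\al_0=0$ is excluded not by the normalization but by the standing hypothesis that $(M,g)$ has no conformal Killing fields, since otherwise the nontrivial limit $W$ would satisfy $L^*LW=0$. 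Note also that your choice of $\theta$ cannot simultaneously remove the leftover factor from the vector equation and produce the clean algebraic balance $\frac{n-1}{n}\tau^2\widehat\phi^{\,N-1}=|L\widehat W|^2\widehat\phi^{\,-N-1}$ in the Lichnerowicz equation: the subcritical defect must surface in one of the two equations, and keeping track of it is exactly what generates $\al_0$.

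Second, the analytic backbone is only gestured at. In the bounded-energy case (and for the compactness assertion) one needs the a priori bound $\phi\le C\,\gammatil^{1/(2N)}$, and the maximum principle alone cannot give it: it requires a pointwise bound on $LW$, which via the vector equation requires $\phi^{N}\in L^{p}$ --- circular. The paper breaks this circle in Proposition \ref{propBoundPhi} by an iteration alternating between the two equations: testing the rescaled Lichnerowicz equation with powers $\widetilde\phi^{\,N+1+Nk_i}$ to upgrade $\|\widetilde\phi^{\,N}\|_{L^{p_i}}$, feeding this into elliptic estimates for $L^*L$, and stopping once $q_i>n$ so that $L\widetilde W$ is H\"older continuous and the maximum principle applies. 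Likewise, the convergence of the rescaled $\phi$'s to the positive root of the algebraic relation --- which the paper obtains by sandwiching with explicit $C^2$ barriers $\widetilde\phi_\pm$ around the candidate limit and applying the maximum principle, using the uniform lower bound coming from the subsolution of Lemma \ref{supinf} --- is flagged by you as the main obstacle but not carried out. Finally, for compactness you cannot simply run the bounded-energy bootstrap on an arbitrary sequence of solutions of the unperturbed system: you must first rule out unbounded energy along such a sequence, which is again the rescaling argument with $\ep_i=0$ (forcing $\al_0=1$) combined with the assumed nonexistence of solutions to \eqref{eqLimit1}.
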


We will call Equation \eqref{eqLimit1} the {\em limit equation} since
it appears as the equation satisfied by a limit of rescaled solutions.
As an immediate consequence we conclude that if Equation
\eqref{eqLimit1} has no solution for any $\al_0 \in (0,1]$, then the
conformal constraint equations \eqref{hamiltonian}-\eqref{momentum}
admit a solution $(\phi,W)$ with $\phi > 0$. Note that a solution to
the limit equation with $\al_0 = 0$ would be a conformal Killing
field, but our standing assumption is that such do not exist on
$(M,g)$.

Using standard elliptic theory one can prove the following
Proposition, we leave out the details of the argument.
\begin{prop} \label{interior_data} The set of metrics $g$ and
  functions $\tau$ for which Equation \eqref{eqLimit1} has no solution
  is open with respect to the $C^1$-topology.
\end{prop}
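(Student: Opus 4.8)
The plan is to establish the equivalent assertion that the set $\mathcal{S}$ of pairs $(g,\tau)$ for which \eqref{eqLimit1} possesses a non-trivial solution $W \in W^{2,p}$ for \emph{some} $\al_0 \in (0,1]$ is closed in the $C^1$-topology. So suppose $(g_k,\tau_k) \to (g_\infty,\tau_\infty)$ in $C^1$, and let $W_k \in W^{2,p} \setminus \{0\}$ and $\al_k \in (0,1]$ solve $-\frac12 L_{g_k}^* L_{g_k} W_k = \al_k \sqrt{\frac{n-1}{n}}\, |L_{g_k} W_k|_{g_k}\, \frac{d\tau_k}{\tau_k}$; we must produce a non-trivial solution of \eqref{eqLimit1} for $(g_\infty,\tau_\infty)$. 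Since the equation is invariant under positive rescaling of $W$, I would first normalize $\|W_k\|_{W^{2,p}} = 1$, and pass to a subsequence so that $\al_k \to \al_\infty \in [0,1]$. Because $\tau_\infty$ vanishes nowhere and $\tau_k \to \tau_\infty$ in $C^1$, the forms $d\tau_k/\tau_k$ are uniformly bounded in $C^0$; together with the Sobolev embedding $W^{2,p} \hookrightarrow C^1$ (as $p > n$) this bounds the right-hand sides uniformly in $C^0$, hence in $L^p$.

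Next I would invoke standard $W^{2,p}$ elliptic estimates for the operators $L_{g_k}^* L_{g_k}$, uniform for metrics in a fixed $C^1$-neighbourhood of $g_\infty$, in the form $\|W_k\|_{W^{2,p}} \le C\big(\|L_{g_k}^* L_{g_k} W_k\|_{L^p} + \|W_k\|_{L^p}\big)$ with $C$ independent of $k$. By Rellich--Kondrachov ($W^{2,p} \hookrightarrow\hookrightarrow C^1$) a further subsequence of $(W_k)$ converges in $C^1$, and weakly in $W^{2,p}$, to some $W_\infty$. This limit is non-trivial: if $W_\infty \equiv 0$ then $W_k \to 0$ in $C^1$, so $L_{g_k} W_k \to 0$ in $C^0$ and the right-hand sides tend to $0$ in $L^p$, whereupon the elliptic estimate forces $\|W_k\|_{W^{2,p}} \to 0$, contradicting the normalization. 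To pass to the limit in the equation itself I would use its weak form, $-\frac12 \int_M \langle L_{g_k} W_k, L_{g_k}\psi \rangle_{g_k}\, dv^{g_k} = \al_k \sqrt{\frac{n-1}{n}} \int_M |L_{g_k}W_k|_{g_k} \langle \frac{d\tau_k}{\tau_k}, \psi \rangle_{g_k}\, dv^{g_k}$ for test $1$-forms $\psi$: only first derivatives of $W_k$ and of the metric appear, so the $C^1$-convergences of $g_k$, $\tau_k$, $W_k$ together with $\al_k \to \al_\infty$ let one pass to the limit term by term, showing that $W_\infty$ weakly solves $-\frac12 L_{g_\infty}^* L_{g_\infty} W_\infty = \al_\infty \sqrt{\frac{n-1}{n}}\, |L_{g_\infty} W_\infty|_{g_\infty}\, \frac{d\tau_\infty}{\tau_\infty}$; elliptic regularity on the fixed manifold $(M,g_\infty)$ then upgrades $W_\infty$ to a $W^{2,p}$ solution.

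It remains to deal with the parameter. If $\al_\infty \in (0,1]$ we are done. If $\al_\infty = 0$, then $L_{g_\infty}^* L_{g_\infty} W_\infty = 0$; pairing with $W_\infty$ gives $L_{g_\infty} W_\infty = 0$, so $W_\infty$ is a non-trivial conformal Killing $1$-form of $g_\infty$, and such a $W_\infty$ satisfies \eqref{eqLimit1} trivially --- both sides vanish --- for every $\al_0 \in (0,1]$. In all cases $(g_\infty,\tau_\infty) \in \mathcal{S}$, so $\mathcal{S}$ is closed and Proposition \ref{interior_data} follows. I expect the delicate point in making this rigorous to be the uniformity of the elliptic estimates under mere $C^1$-convergence of the metrics: the zeroth-order part of $L_g^* L_g$ contains the Ricci tensor, which the $C^1$-norm does not control, so one should either restrict to metrics lying in a bounded set of $W^{2,p}$ or argue more carefully. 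The related issue of excluding a vanishing (``bubbling'') limit is then handled as in the second step by combining the normalization with the weak formulation, which sees the metric only through its first derivatives.
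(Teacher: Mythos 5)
The paper itself gives no argument for Proposition \ref{interior_data} (it only says ``using standard elliptic theory \dots we leave out the details''), so your proposal has to be judged on its own terms. Your overall scheme is the natural one: closedness of the complement, scaling normalization, extraction of $\alpha_k \to \alpha_\infty$, passage to the limit in the weak formulation (which indeed sees $g$ only through $g$ and $\partial g$), and the correct treatment of $\alpha_\infty = 0$, where the limit is a nonzero conformal Killing $1$-form and hence still a nontrivial solution of \eqref{eqLimit1} for every $\alpha_0$. The genuine gap is exactly the step you flag and do not resolve: the claimed estimate $\|W_k\|_{W^{2,p}} \le C\bigl(\|L_{g_k}^*L_{g_k}W_k\|_{L^p} + \|W_k\|_{L^p}\bigr)$ with $C$ uniform over a $C^1$-neighbourhood of $g_\infty$. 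In the setting of the Proposition the metrics are only $W^{2,p}$, and a sequence $g_k \to g_\infty$ in $C^1$ may have $\|\Ric^{g_k}\|_{L^p} \to \infty$; since the zeroth-order coefficients of $L_{g}^*L_{g}$ in non-divergence form are precisely such curvature terms, the constant in the Calder\'on--Zygmund estimate blows up with them and cannot be absorbed. Both your compactness of $(W_k)$ and your exclusion of $W_\infty \equiv 0$ rest on this uniform estimate, so the proof as written does not go through; and the fallback you suggest (metrics in a bounded subset of $W^{2,p}$) only yields a weaker statement than $C^1$-openness. A minor related point: the normalization $\|W_k\|_{W^{2,p}}=1$ should be taken in a fixed background norm, since the $W^{2,p}$-norms of the $g_k$ are not uniformly equivalent under mere $C^1$-bounds.

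The statement can nevertheless be proved with only $C^1$-control by staying at the energy level, where no second derivatives of the metric appear. Normalize $\|W_k\|_{L^2}=1$; pairing the equation with $W_k$ gives $\|L_{g_k}W_k\|_{L^2} \le 2\sqrt{\tfrac{n-1}{n}}\,\|d\tau_k/\tau_k\|_{L^\infty}$, uniformly bounded, and a Korn--G\aa rding inequality for the overdetermined elliptic first-order operator $L_{g_k}$ (whose constant depends only on the modulus of continuity of $g_k$ and sup-bounds on $\partial g_k$, hence is uniform under $C^1$-convergence) then bounds $W_k$ in $H^1$; extract $W_k \rightharpoonup W_\infty$ in $H^1$, strongly in $L^2$, so $\|W_\infty\|_{L^2}=1$. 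The strong convergence needed to handle the nonlinearity $|L W_k|$ comes not from higher-order estimates but from testing the $k$-th equation with $W_k - W_\infty$: the right-hand sides are bounded in $L^2$ while $W_k - W_\infty \to 0$ in $L^2$, and the bilinear forms $\tfrac12\int_M \langle L_{g_k}\cdot, L_{g_k}\cdot\rangle\, dv^{g_k}$ have coefficients converging uniformly, whence $\int_M |L_{g_k}(W_k - W_\infty)|^2\, dv \to 0$ and $W_k \to W_\infty$ strongly in $H^1$; thus $|L_{g_k}W_k| \to |L_{g_\infty}W_\infty|$ in $L^2$ and one passes to the limit in the weak form, obtaining a nontrivial $H^1$-solution for $(g_\infty,\tau_\infty,\alpha_\infty)$. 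Elliptic regularity for the single fixed metric $g_\infty \in W^{2,p}$ (where curvature in $L^p$ is harmless) upgrades $W_\infty$ to $W^{2,p}$, and your $\alpha_\infty = 0$ discussion finishes the contradiction. With this replacement of the uniform $W^{2,p}$ estimate, your argument becomes a complete proof.
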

We conclude that Theorem \ref{main} not only tells us that
\eqref{hamiltonian}-\eqref{momentum} have a solution if the limit
equation has no solution, but also that $g,\tau,\sigma$ are in the
interior of the set of data for which
\eqref{hamiltonian}-\eqref{momentum} have solutions.

Note that we do not make any claim about uniqueness. Since our
construction use the Schauder fixed point theorem as a central step we
cannot draw any conclusion about the uniqueness of solutions.

We continue with some applications of Theorem \ref{main}.

\begin{cor} \label{Cor1} Let data be given on $M$ as specified in
  \eqref{givendata} and assume that \eqref{assumptions} holds. Let
  $\lambda > 0$. Assume that the Ricci curvature $\Ric$ of $(M,g)$
  satisfies $\Ric \leq -\lambda g$ and that
  \begin{equation*}
    \left\| 
      \frac{d \tau}{\tau}
    \right\|_{ L^\infty} < \sqrt{\frac{n}{2(n-1)} \lambda}, 
  \end{equation*}      
  then the system of equations \eqref{hamiltonian}-\eqref{momentum}
  admits a solution $(\phi,W)$ with $\phi > 0$. Furthermore, the set
  of solutions $(\phi,W) \in W^{2, p}_+ \times W^{2,p}$ is compact.
\end{cor}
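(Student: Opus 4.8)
The plan is to apply Theorem \ref{main} and rule out the second alternative, namely the existence of a non-trivial solution $W$ of the limit equation \eqref{eqLimit1} for any $\al_0 \in (0,1]$; once that alternative is excluded, the first alternative of Theorem \ref{main} gives exactly the conclusion, including compactness of the solution set. So the whole argument reduces to a Bochner-type estimate showing that \eqref{eqLimit1} has no non-trivial solution under the stated curvature and gradient hypotheses.

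First I would pair \eqref{eqLimit1} with $W$ and integrate over $M$. Since $L^*L$ is the operator naturally associated to $L$ by integration by parts, the left-hand side gives $\frac12 \int_M |LW|^2 \, dv^g$ (up to the normalizing constant), while the right-hand side becomes $\al_0 \sqrt{(n-1)/n} \int_M |LW| \, \langle d\tau/\tau, W\rangle \, dv^g$, which by Cauchy--Schwarz is bounded by $\al_0 \sqrt{(n-1)/n} \, \|d\tau/\tau\|_{L^\infty} \int_M |LW|\,|W| \, dv^g$. The key point is then to control $\int_M |LW|\,|W|\,dv^g$ by $\int_M |LW|^2 \, dv^g$, which I expect to do via a Bochner/Weitzenb\"ock formula for $L$: on $1$-forms one has an identity of the schematic form $\int_M |LW|^2 \, dv^g = c_1 \int_M |\nabla W|^2 \, dv^g + c_2 \int_M \Ric(W,W)\,dv^g + \dots$ (with the precise constants depending on $n$ and on how the divergence term is handled). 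Under the assumption $\Ric \leq -\lambda g$, the Ricci term contributes $\leq -\lambda \int_M |W|^2 \, dv^g$ with an unfavorable-looking sign, but upon rearranging it is exactly this negativity that forces $W$ to vanish: combining the Bochner identity with the integrated equation yields an inequality of the form
\begin{equation*}
  \lambda \int_M |W|^2 \, dv^g
  \leq
  C(n) \left\| \frac{d\tau}{\tau} \right\|_{L^\infty}^2 \int_M |W|^2 \, dv^g
\end{equation*}
(after a further Cauchy--Schwarz/Young step to absorb the gradient terms), where $C(n)$ is arranged to equal $2(n-1)/n$. Since the hypothesis gives $\|d\tau/\tau\|_{L^\infty}^2 < \frac{n}{2(n-1)}\lambda$, i.e. $C(n)\|d\tau/\tau\|_{L^\infty}^2 < \lambda$, we conclude $\int_M |W|^2 = 0$, so $W \equiv 0$, contradicting non-triviality. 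Hence \eqref{eqLimit1} has no non-trivial solution for any $\al_0 \in (0,1]$, and Theorem \ref{main} finishes the proof.

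The main obstacle is bookkeeping the constants so that the threshold comes out exactly as $\sqrt{\frac{n}{2(n-1)}\lambda}$: one must track the factor $\frac12$ in $-\frac12 L^*L$, the factor $\sqrt{(n-1)/n}$, the coefficient relating $|LW|^2$ to $|\nabla W|^2$ in the Bochner formula for the conformal Killing operator (which involves the trace-free symmetrization and a divergence term that, on a closed manifold, integrates against $\divergence W$ and must be estimated), and the Young inequality constant used to absorb $\int |\nabla W|^2$ against $\int |LW|^2$. One also uses implicitly that $\al_0 \le 1$, so the worst case $\al_0 = 1$ is what the threshold is calibrated against. Regularity is not an issue here: a $W^{2,p}$ solution of \eqref{eqLimit1} with $p>n$ is smooth enough for all these integrations by parts to be justified by density, and $\tau \in W^{1,p}$ with $\tau$ nowhere vanishing makes $d\tau/\tau \in L^\infty$ well-defined by Sobolev embedding.
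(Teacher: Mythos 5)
Your proposal follows essentially the same route as the paper: apply Theorem \ref{main}, pair the limit equation \eqref{eqLimit1} with $W$, use H\"older and $\al_0\le 1$, and invoke the Bochner-type identity \eqref{LW2} with $\Ric\le-\lambda g$ to get $\int_M|LW|^2\,dv\ge 2\lambda\int_M|W|^2\,dv$, which yields exactly the threshold $\sqrt{\tfrac{n}{2(n-1)}\lambda}$ and the contradiction. The only small remark is that no Young-type absorption of $\int_M|\nabla W|^2\,dv$ is needed: in \eqref{LW2} the terms $|\nabla W|^2$ and $\left(1-\tfrac{2}{n}\right)|\divergence W|^2$ are nonnegative and can simply be discarded, which is how the paper concludes.
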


The next result is less explicit but holds without any curvature
assumptions. We define
\begin{equation*}
  C_g
  \definedas  
  \inf 
  \frac{{\left( \int_M |LV|^2 \, dv \right)}^{\frac{1}{2}}}
  {{\left( \int_ M |V|^N \, dv \right)}^{\frac{1}{N}}} ,
\end{equation*}
where the infimum is taken over all smooth $1$-forms $V$ on $M$ with
$V \not\equiv 0$. By standard elliptic estimates the constant $C_g$ is
positive if there are no conformal Killing vector fields on $(M,g)$,
see Lemma \ref{lem_Cg_positive} in the Appendix.

\begin{cor} \label{Cor2} Let data be given on $M$ as specified in
  \eqref{givendata} and assume that \eqref{assumptions} holds. If
  \begin{equation*}
    \left\| 
      \frac{d \tau}{\tau}
    \right\|_{L^n} < \frac{1}{2} \sqrt{\frac{n}{n-1}} \; C_g, 
  \end{equation*} 
  then the system of equations \eqref{hamiltonian}-\eqref{momentum}
  admits a solution $(\phi,W)$ with $\phi > 0$. Furthermore, the set
  of solutions $(\phi,W) \in W^{2, p}_+ \times W^{2,p}$ is compact.
\end{cor}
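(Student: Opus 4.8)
The plan is to deduce Corollary \ref{Cor2} from Theorem \ref{main}: it suffices to show that, under the stated bound on $\|d\tau/\tau\|_{L^n}$, the limit equation \eqref{eqLimit1} has no non-trivial solution for any $\al_0 \in (0,1]$. Theorem \ref{main} then forces the first alternative, namely the existence of a solution $(\phi,W)$ of \eqref{hamiltonian}--\eqref{momentum} with $\phi>0$, together with compactness of the solution set.

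So suppose, for contradiction, that $W \in W^{2,p}$ is a non-trivial $1$-form with $-\frac{1}{2} L^*L W = \al_0 \sqrt{\frac{n-1}{n}} |LW| \frac{d\tau}{\tau}$ for some $\al_0 \in (0,1]$. Since $p>n$, $W$ is regular enough to pair the equation with $W$ and integrate by parts; using $\int_M \langle L^*LW, W\rangle \, dv = \int_M |LW|^2 \, dv$ this gives
\begin{equation*}
  \frac{1}{2} \int_M |LW|^2 \, dv
  = \al_0 \sqrt{\frac{n-1}{n}} \int_M |LW| \, \langle \tfrac{d\tau}{\tau}, W \rangle \, dv .
\end{equation*}
Because $(M,g)$ has no conformal Killing vector fields, $W$ non-trivial forces $\int_M |LW|^2 \, dv > 0$; this is precisely the input behind the positivity of $C_g$ (Lemma \ref{lem_Cg_positive}).

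Next I would estimate the right-hand side. Pointwise Cauchy--Schwarz gives $|\langle \frac{d\tau}{\tau}, W\rangle| \le |\frac{d\tau}{\tau}|\, |W|$, and then Hölder's inequality with the three exponents $n$, $2$, and $N = \frac{2n}{n-2}$, which are conjugate since $\frac{1}{n} + \frac{1}{2} + \frac{1}{N} = 1$, yields
\begin{equation*}
  \left| \int_M |LW| \, \langle \tfrac{d\tau}{\tau}, W\rangle \, dv \right|
  \le \left\| \tfrac{d\tau}{\tau} \right\|_{L^n} \|LW\|_{L^2} \, \|W\|_{L^N} .
\end{equation*}
By the definition of $C_g$ we have $\|W\|_{L^N} \le C_g^{-1} \|LW\|_{L^2}$, so the right-hand side of the integrated equation is at most $\al_0 \sqrt{\frac{n-1}{n}} \, C_g^{-1} \|\frac{d\tau}{\tau}\|_{L^n} \|LW\|_{L^2}^2$. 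Dividing through by $\|LW\|_{L^2}^2 > 0$ and using $\al_0 \le 1$ gives
\begin{equation*}
  \frac{1}{2} \le \sqrt{\frac{n-1}{n}} \; \frac{1}{C_g} \left\| \frac{d\tau}{\tau} \right\|_{L^n} ,
\end{equation*}
that is, $\|\frac{d\tau}{\tau}\|_{L^n} \ge \frac{1}{2}\sqrt{\frac{n}{n-1}} \, C_g$, contradicting the hypothesis. Hence \eqref{eqLimit1} has no non-trivial solution, and Corollary \ref{Cor2} follows.

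The argument is short, and I do not expect a substantial obstacle. The only points requiring care are the integration-by-parts identity for $L^*L$ and the attendant regularity (guaranteed by $p>n$), the bookkeeping that $n$, $2$, $N$ are conjugate Hölder exponents, and the fact that $\|LW\|_{L^2} > 0$ for non-trivial $W$ — which is exactly where the assumption that $(M,g)$ carries no conformal Killing vector fields enters, through Lemma \ref{lem_Cg_positive}.
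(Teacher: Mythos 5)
Your proposal is correct and follows essentially the same route as the paper: pair the limit equation with $W$, integrate, apply H\"older with the conjugate exponents $n$, $2$, $N$, invoke the definition of $C_g$, and use $\al_0 \le 1$ to contradict the hypothesis, so Theorem \ref{main} yields existence and compactness. The only difference is that you spell out the H\"older exponent bookkeeping and the role of Lemma \ref{lem_Cg_positive} explicitly, which the paper leaves implicit.
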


Since we do not know all the solutions of
\eqref{hamiltonian}-\eqref{momentum} a natural question is the
following: given $M$, $\tau$, and $\sigma$ as above, what can we say
about the set of metrics $g$ for which the system of equations
\eqref{hamiltonian}-\eqref{momentum} has a solution? Is it a large set
in some appropriate sense? The next result gives a partial answer to
this question.

\begin{cor} \label{Cor3} 
Let $\tau \in W^{1, p}$ be a function on $M$ which vanishes
nowhere. Let $\mathcal{R}(M,\tau)$ be the set of metrics on $M$ such
that for all symmetric, trace- and divergence-free $(0,2)$-tensors 
$\sigma \in W^{1,p}$ the assumptions \eqref{assumptions} hold and the
system of equations \eqref{hamiltonian}-\eqref{momentum} admits a
solution with $\phi > 0$. Then the set $\mathcal{R}(M,\tau)$ is dense
with respect to the $C^0$-topology in the set $\mathcal{R}(M)$ of all
metrics on $M$.
\end{cor}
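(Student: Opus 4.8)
The plan is to show that $\mathcal{R}(M,\tau)$ is dense in $\mathcal{R}(M)$ by perturbing an arbitrary metric $g_0$ slightly in the $C^0$-topology to a metric $g$ that satisfies the hypotheses of Corollary \ref{Cor1}, which then guarantees solvability of \eqref{hamiltonian}-\eqref{momentum} for every admissible $\sigma$. The main point is that Corollary \ref{Cor1} asks for a pointwise upper Ricci bound $\Ric \le -\lambda g$ together with the explicit smallness condition $\|d\tau/\tau\|_{L^\infty} < \sqrt{\tfrac{n}{2(n-1)}\lambda}$. Since $\tau$ is fixed, the quantity $\|d\tau/\tau\|_{L^\infty}$ changes when we change the metric (the norm of the $1$-form $d\tau$ is metric-dependent), so I must simultaneously arrange a strongly negative Ricci curvature and keep $d\tau/\tau$ from blowing up. Both can be achieved by rescaling: replacing $g_0$ by $t g_0$ for large $t > 0$ does not change the Ricci tensor $\Ric_{tg_0} = \Ric_{g_0}$ as a $(0,2)$-tensor, so the bound $\Ric_{tg_0} \le -\lambda\, tg_0$ for a fixed $\lambda$ is just $\Ric_{g_0} \le -\lambda t\, g_0$, which cannot hold unless $\Ric_{g_0}$ is already negative definite. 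So pure rescaling is not enough; one needs first to deform $g_0$ to have negative Ricci curvature.

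The key input is the theorem of Lohkamp that every manifold of dimension $n \ge 3$ admits metrics of negative Ricci curvature, and more precisely that such metrics are $C^0$-dense in the space of all metrics. So first I would invoke Lohkamp's density theorem to find, given $g_0$ and $\delta > 0$, a metric $g_1$ with $\Ric_{g_1} < 0$ and $\|g_1 - g_0\|_{C^0} < \delta/2$. Since $M$ is compact, $\Ric_{g_1} \le -\lambda_1 g_1$ for some $\lambda_1 > 0$. Next I would rescale: for $t \ge 1$ set $g_t = t g_1$. Then $\Ric_{g_t} = \Ric_{g_1} \le -\lambda_1 g_1 = -\tfrac{\lambda_1}{t} g_t$, so with $\lambda(t) \definedas \lambda_1/t$ we have $\Ric_{g_t} \le -\lambda(t) g_t$. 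Meanwhile $|d\tau|_{g_t}^2 = t^{-1}|d\tau|_{g_1}^2$, so $\|d\tau/\tau\|_{L^\infty(g_t)} = t^{-1/2}\|d\tau/\tau\|_{L^\infty(g_1)}$, and the required inequality becomes
\begin{equation*}
  t^{-1/2} \left\| \frac{d\tau}{\tau} \right\|_{L^\infty(g_1)}
  < \sqrt{\frac{n}{2(n-1)} \cdot \frac{\lambda_1}{t}}
  = t^{-1/2}\sqrt{\frac{n}{2(n-1)}\lambda_1},
\end{equation*}
i.e. exactly the $t$-independent condition $\|d\tau/\tau\|_{L^\infty(g_1)} < \sqrt{\tfrac{n}{2(n-1)}\lambda_1}$. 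Rescaling does not help here. So I must instead exploit Lohkamp's theorem more strongly: it allows one to prescribe the Ricci curvature to be as negative as one likes while staying $C^0$-close. Concretely, given $\delta > 0$ and any constant $K > 0$, there is a metric $g$ with $\Ric_g \le -K g$ and $\|g - g_0\|_{C^0} < \delta$ — this follows because Lohkamp's construction produces negative Ricci metrics $C^0$-close to any prescribed metric, and applying it to a metric built to have very negative curvature, or by combining density with the scaling $g \mapsto \epsilon g_1$ for small $\epsilon$ applied inside a region where one already has room; more directly, Lohkamp proves the space of metrics with $\Ric < -K$ is $C^0$-dense for each fixed $K$. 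Granting this, choose $K$ large enough that $\sqrt{\tfrac{n}{2(n-1)}K} > \|d\tau/\tau\|_{C^1} \ge \|d\tau/\tau\|_{L^\infty(g)}$ for all $g$ with $\|g-g_0\|_{C^0}$ small (the $L^\infty$ norm of $d\tau$ with respect to $g$ is controlled by a fixed background norm once $g$ is $C^0$-close to $g_0$), then pick $g \in \mathcal{R}(M)$ with $\Ric_g \le -Kg$ and $\|g - g_0\|_{C^0} < \delta$. One checks the hypotheses \eqref{assumptions}: $\tau$ vanishes nowhere by assumption; a metric with $\Ric < 0$ has no Killing fields, hence (by a standard Bochner argument, using $\ker L = 0 \Leftrightarrow$ no conformal Killing fields, and negative Ricci rules out conformal Killing fields on compact $M$ for $n \ge 2$) no conformal Killing vector fields; and $Y(g) < 0$ since $\Ric_g < 0$ forces $R_g < 0$, so the third bullet of \eqref{assumptions} is vacuous and we need not worry about $\sigma \equiv 0$. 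Then Corollary \ref{Cor1} applies to $(M,g,\tau,\sigma)$ for every admissible $\sigma$, so $g \in \mathcal{R}(M,\tau)$, proving density.

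The main obstacle is the interplay between the metric-dependence of $\|d\tau/\tau\|_{L^\infty}$ and the Ricci bound: I must be sure that making $\Ric_g$ very negative (which is what gives room in the smallness condition via the $\sqrt{\lambda}$) can be done while keeping $g$ $C^0$-close to $g_0$ and while not simultaneously inflating $\|d\tau/\tau\|_{L^\infty(g)}$. The resolution is that $C^0$-closeness of $g$ to the fixed $g_0$ gives a uniform two-sided bound $c^{-1}g_0 \le g \le c g_0$ with $c \to 1$, hence $\|d\tau\|_{L^\infty(g)} \le c^{1/2}\|d\tau\|_{L^\infty(g_0)}$ stays bounded independently of how negative $\Ric_g$ is, whereas Lohkamp's theorem decouples the sign/size of the Ricci curvature from $C^0$-closeness. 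So the only genuinely external ingredient is Lohkamp's $C^0$-density of metrics with $\Ric \le -K g$ for arbitrarily large $K$; everything else is a bookkeeping check of the hypotheses of Corollary \ref{Cor1} and \eqref{assumptions}.
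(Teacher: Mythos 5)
Your proposal is correct and follows essentially the same route as the paper: invoke Lohkamp's $C^0$-density of metrics satisfying a prescribed negative upper Ricci bound $\Ric \leq -\lambda g$, observe that $\|d\tau/\tau\|_{L^\infty}$ computed in a $C^0$-nearby metric stays under control, and then apply Corollary \ref{Cor1}; the paper merely makes the bookkeeping concrete by taking $\lambda = \tfrac{4(n-1)}{n}\|d\tau/\tau\|_{L^\infty(M,g)}^2$ and noting $\|d\tau/\tau\|_{L^\infty(M,g_i)}^2 \leq \tfrac{3}{2}c < \tfrac{n}{2(n-1)}\lambda$ for $i$ large. Your extra verification that negative Ricci rules out conformal Killing fields (via \eqref{LW2}) and forces $Y(g)<0$, so that \eqref{assumptions} holds for every $\sigma$, is a point the paper leaves implicit and is handled correctly.
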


Corollary \ref{Cor3} leads to the question if the set 
$\mathcal{R}(M,\tau,\sigma)$ is dense in a stronger topology. 
From a physical point of view,  the gravitational field depends on
the Levi-Civita connection and hence on the first derivatives of $g$.
So, a $C^1$-density result would be interesting. At the moment, we are
not able to obtain such a statement. One could speculate that the
limit equation \eqref{eqLimit1} never admits a non-trivial
solution. The following proposition says that this is not true.   

\begin{prop} \label{example_nd} 
On the sphere $S^n$ there exists a
metric $g$ and a function $\tau$ such that the
limit equation \eqref{eqLimit1} for $\tau$, $g$ has a non-trivial
solution for some $\al_0 \in (0,1]$.
\end{prop}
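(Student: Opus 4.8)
Rather than writing down $W$ explicitly, the plan is to deduce the statement from the dichotomy of Theorem \ref{main} together with the non-existence result Theorem \ref{nonexist}. The idea is to choose data $(g,\tau,\sigma)$ on $S^n$ for which \eqref{assumptions} holds but the conformal constraint equations \eqref{hamiltonian}-\eqref{momentum} have no solution; the second alternative in Theorem \ref{main} is then forced, and it is exactly the assertion of the Proposition.

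First I would fix the metric. It is well known that every closed manifold of dimension $n \ge 3$ carries a metric of negative scalar curvature, so $S^n$ does; and any such metric $g_0$ has $Y(g_0) < 0$, as one sees by testing the Yamabe quotient with the constant function. Since the conditions ``$Y(g) < 0$'' and ``$(M,g)$ has no conformal Killing vector fields'' are respectively open and generic with respect to the $W^{2,p}$-topology, an arbitrarily small perturbation yields $g \in W^{2,p}$ with $Y(g) < 0$ and $\ker L = 0$. With this $g$ I would take $\sigma \equiv 0$; because $Y(g) < 0$ the third item of \eqref{assumptions} is then vacuous, so \eqref{assumptions} holds as soon as $\tau$ vanishes nowhere.

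Next I would choose $\tau$. Fixing a non-constant $h \in W^{1,p}$ and setting $\tau = 1 + \ep h$, for $\ep > 0$ small the function $\tau \in W^{1,p}$ is non-constant, vanishes nowhere, and has $\| d\tau/\tau \|$ as small as we wish; in particular $\tau$ can be taken in the range to which Theorem \ref{nonexist} applies. Since in addition $\sigma \equiv 0$, Theorem \ref{nonexist} shows that \eqref{hamiltonian}-\eqref{momentum} has no solution $(\phi,W)$ with $\phi > 0$. Now apply Theorem \ref{main} to the data $(g,\tau,\sigma)$ just constructed, for which \eqref{assumptions} holds: the first alternative is impossible by the previous sentence, hence the second alternative must hold, i.e.\ there is a non-trivial $W \in W^{2,p}$ solving \eqref{eqLimit1} for some $\al_0 \in (0,1]$. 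That $\al_0 \neq 0$ is automatic: a solution with $\al_0 = 0$ would be a conformal Killing $1$-form, and $\ker L = 0$ for our $g$. This is the claim.

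I do not expect a genuine obstacle here: the only non-formal input is the existence of a metric with $Y(g) < 0$ on $S^n$, which is classical, and the things to check -- that the perturbation of $g_0$ keeps $Y(g) < 0$ and removes the conformal Killing fields, and that $\tau$ falls in the regime of Theorem \ref{nonexist} -- are routine. By contrast, a fully explicit construction looks harder, which is presumably why one argues abstractly: on a rotationally symmetric $S^n$ an $SO(n)$-invariant $1$-form is necessarily of the form $w(r)\,dr$, and the ODE to which \eqref{eqLimit1} then reduces has no non-trivial solution that is regular at both poles when $\tau$ is nowhere zero, so any explicit example would have to break the symmetry.
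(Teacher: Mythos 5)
There is a genuine gap, and it is exactly the obstruction the paper points out when it says that the hypotheses of Theorem \ref{main} and Theorem \ref{nonexist} are mutually exclusive. Theorem \ref{nonexist} does not merely require $\sigma \equiv 0$ and $\| d\tau/\tau\|_{L^n}$ small: it also requires the operator $\frac{3n-2}{n-1}\Delta + R$ to be non-negative, and (as noted right after its statement) this condition is \emph{stronger} than $Y(g) \geq 0$, since $\frac{3n-2}{n-1} < \frac{4(n-1)}{n-2}$. Your construction takes $Y(g) < 0$ precisely so that \eqref{assumptions} holds with $\sigma \equiv 0$; but then the operator hypothesis of Theorem \ref{nonexist} fails, so you cannot conclude non-existence of solutions to \eqref{hamiltonian}--\eqref{momentum}. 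Conversely, if you arranged the operator to be non-negative, then $Y(g) \geq 0$ and the choice $\sigma \equiv 0$ violates the third item of \eqref{assumptions}, so Theorem \ref{main} is no longer available. There is no choice of data on which both theorems apply simultaneously, so the dichotomy argument cannot be closed. Worse, for the data you actually construct ($Y(g)<0$, $\sigma \equiv 0$, $\|d\tau/\tau\|$ small) the conclusion is the opposite of what you need: Corollary \ref{Cor2} (whose proof shows directly that the limit equation \eqref{eqLimit1} has no non-trivial solution once $\|d\tau/\tau\|_{L^n} < \frac12\sqrt{n/(n-1)}\,C_g$) guarantees that the conformal constraint equations \emph{do} have a solution and that \eqref{eqLimit1} has only the trivial solution. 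Smallness of $d\tau/\tau$ is the regime where the limit equation is unsolvable, so any example must be far from CMC, not near it.

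For comparison, the paper's proof is of a different nature: it builds a specific metric on $S^n$ that is round near the poles and has no local conformal Killing fields elsewhere, takes $\tau = e^{x_1}$ (so that $d\tau/\tau$ is conformal Killing near the poles, giving the pointwise bound \eqref{taultau}), and considers the family $\tau^a$ for large $a$. Assuming the limit equation never has a non-trivial solution, it invokes Proposition \ref{limit_equations} with $\sigma_a = a^{-1}\sigma$, where $\sigma \not\equiv 0$ is transverse-traceless and supported away from the poles, and derives estimates forcing $\int \langle \sigma, aLW_a\rangle\,dv$ to have a non-zero limit, contradicting $\int \langle \sigma, LW_a\rangle\,dv = 0$ from divergence-freeness. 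So the genuinely hard input is this quantitative contradiction argument with $a \to \infty$, not a combination of the existence and non-existence theorems; your aside about rotationally symmetric metrics does not substitute for it.
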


The proof is surprisingly complicated, it uses a contradiction
argument which might be unexpected for such an existence result.

Our last result is not related to the limit equation. It is a near-CMC
non-existence result which strengthens previous results of J. Isenberg
and N. {\'O}~Murchadha \cite[Theorem 2]{IsenbergOMurchadha}.

\begin{theorem} \label{nonexist} Let data be given on $M$ as specified
  in \eqref{givendata}. Assume that $\tau$ vanishes nowhere, $(M,g)$
  has no conformal Killing vector fields, and the operator
  \begin{equation*}
    \frac{4(n-1)}{n-2} \frac{N+1}{\left(\frac{N}{2}+1\right)^2} \Delta + R
    =
    \frac{3n-2}{n-1}\Delta + R 
  \end{equation*}
  is non-negative. Then if $\sigma \equiv 0$ and
  \begin{equation*}
    \left\| 
      \frac{d \tau}{\tau}
    \right\|_{L^n} <  \frac{1}{2} \sqrt{\frac{n}{n-1}} \; C_g, 
  \end{equation*} 
  there is no solution $(\phi,W) \in W^{2, p}_+ \times W^{2,p}$ of
  \eqref{hamiltonian}-\eqref{momentum}.
\end{theorem}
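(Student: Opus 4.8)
\textbf{Proof proposal for Theorem \ref{nonexist}.}
The plan is to argue by contradiction: suppose $(\phi, W) \in W^{2,p}_+ \times W^{2,p}$ solves \eqref{hamiltonian}-\eqref{momentum} with $\sigma \equiv 0$, and derive a contradiction with the smallness hypothesis on $\|d\tau/\tau\|_{L^n}$. Since $\sigma \equiv 0$, the vector equation \eqref{momentum} reads $-\frac12 L^*LW = \frac{n-1}{n}\phi^N d\tau$, and the Lichnerowicz equation \eqref{hamiltonian} becomes $\frac{4(n-1)}{n-2}\Delta\phi + R\phi = -\frac{n-1}{n}\tau^2\phi^{N-1} + |LW|^2\phi^{-N-1}$. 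The first step is to obtain an $L^2$-type estimate on $LW$ by pairing the vector equation against $W$: integration by parts gives $\frac12\int_M |LW|^2\,dv = \frac{n-1}{n}\int_M \phi^N \langle d\tau, W\rangle\,dv$. Applying Hölder with exponents matched to the definition of $C_g$ (so that $\|W\|_{L^N}$ appears) and the definition $\int_M |LV|^2 \geq C_g^2 \|V\|_{L^N}^2$ should yield a bound of the shape $\|LW\|_{L^2}^2 \lesssim \frac1{C_g}\|d\tau\|\cdot(\text{something involving }\phi^N)$, where the $\|d\tau/\tau\|_{L^n}$ norm enters after writing $d\tau = \tau \cdot (d\tau/\tau)$ and using Hölder to balance the $\tau\phi^N$ factor against a power of $\phi$ controlled by the Lichnerowicz equation.

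The key analytic input is to control $\int_M \tau^2 \phi^N\,dv$ (or a comparable quantity) from the Lichnerowicz equation. Here I would test \eqref{hamiltonian} against an appropriate power of $\phi$ — the choice of exponent is dictated by the coefficient $\frac{N+1}{(N/2+1)^2}$ appearing in the hypothesis, which strongly suggests multiplying by $\phi^{-N-1}\cdot\phi^{?}$ or, more precisely, writing $\phi^{N-1} = (\phi^{N/2})^{(2N-2)/N}$ and testing against $\phi^{N/2}\cdot\phi^{-\text{something}}$ so that the gradient term $\int \Delta\phi \cdot \phi^{k}$ produces a term $\propto \int |\nabla \phi^{(k+1)/2}|^2$ up to the constant $\frac{N+1}{(N/2+1)^2}$, and the scalar curvature term combines with it into exactly the operator $\frac{3n-2}{n-1}\Delta + R$. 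Non-negativity of this operator then lets us drop the $\int(\ldots)\phi^{?}$ term with a favourable sign, leaving an inequality that bounds $\int_M \tau^2 \phi^N\,dv$ (and $\int |LW|^2 \phi^{-N-1}$-type terms) in terms of itself times a constant multiple of $\|d\tau/\tau\|_{L^n}^2$ relative to $C_g^2$. Combining this with the estimate from the vector equation closes the loop: one gets $\Theta \leq c\,\|d\tau/\tau\|_{L^n}^2\, C_g^{-2}\,\Theta$ for the relevant nonnegative quantity $\Theta$, and if $\Theta > 0$ this forces $\|d\tau/\tau\|_{L^n} \geq \frac12\sqrt{\frac{n}{n-1}}\,C_g$, contradicting the hypothesis; while $\Theta = 0$ would force $LW \equiv 0$ and then, via the vector equation, $\phi^N d\tau \equiv 0$, hence $d\tau \equiv 0$ since $\phi > 0$, so $\tau$ is constant, and then the Lichnerowicz equation with $\sigma \equiv 0$ and $\tau$ constant either has no positive solution or forces a solution that is excluded — in any case this degenerate branch must be ruled out separately using $\tau \neq 0$ and the sign of $Y(g)$ implied by non-negativity of $\frac{3n-2}{n-1}\Delta + R$.

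The main obstacle I anticipate is getting the constants to line up \emph{exactly} so that the threshold comes out as $\frac12\sqrt{\frac{n}{n-1}}\,C_g$ rather than something weaker. This requires the precise choice of test function power in the Lichnerowicz equation — the factor $\frac{N+1}{(N/2+1)^2}$ is not an accident, it is exactly the constant relating $\int \phi^{N-2}|\nabla\phi|^2$-type integrals to $\int|\nabla\phi^{N/2}|^2$ after an integration by parts with the correct exponent — and then a careful Hölder/Young interpolation so that the cross term $\int \tau\phi^N\langle d\tau/\tau, W\rangle$ is absorbed optimally (no wasteful $\epsilon$-Young splitting that would degrade the constant). Tracking these is bookkeeping rather than conceptually hard, but it is where the proof lives or dies; the structural part (contradiction, two integral identities, non-negativity of the combined operator to kill the bad-sign term) is straightforward. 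One also needs the regularity $p > n$ to justify all the integrations by parts and to know $\phi \in W^{2,p} \subset C^1$ is bounded above and below so that the powers of $\phi$ are integrable against the $W^{1,p}$ data.
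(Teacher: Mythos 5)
Your plan is essentially the paper's proof: one multiplies the Lichnerowicz equation by $\phi^{N+1}$ (exactly the exponent your coefficient heuristic points to, since $\int_M \phi^{N+1}\Delta\phi\,dv = \frac{N+1}{(N/2+1)^2}\int_M |d\phi^{N/2+1}|^2\,dv$), uses non-negativity of $\frac{3n-2}{n-1}\Delta+R$ to drop the first two terms and obtain $\frac{n-1}{n}\int_M\tau^2\phi^{2N}\,dv\le\int_M|LW|^2\,dv$, then pairs the vector equation with $W$, applies H\"older with exponents $(2,n,N)$ and the definition of $C_g$, and contradicts the smallness of $\|d\tau/\tau\|_{L^n}$ with no Young splitting needed. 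The degenerate branch you set aside is excluded automatically: since $\tau$ vanishes nowhere and $\phi>0$, the inequality above forces $\int_M|LW|^2\,dv>0$, hence $W\not\equiv0$.
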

The assumption of non-negativity of this operator is weaker than $R
\geq 0$ but stronger than $Y(g) \geq 0$. This last fact is natural
since the conformal method is not conformally covariant. Also, the
assumptions made for Theorem \ref{main} and Theorem \ref{nonexist} are
mutually exclusive.

In Section \ref{sec_proof_main} we introduce the subcritical
perturbation of the conformal constraint equations and prove Theorem
\ref{main}. In Section \ref{sec_proof_cor} we give proofs of the
corollaries and of Proposition \ref{example_nd}. Section
\ref{sec_proof_nonexist} is devoted to a proof of the non-existence
result in Theorem \ref{nonexist}, and finally in the Appendix we prove
that the constant $C_g$ is positive.

The methods of this paper will be applied to the vacuum constraint
equations on asymptotically hyperbolic manifolds in
\cite{GicquaudSakovich}.  The asymptotically euclidean case will be
treated in a forthcoming paper.

We thank Jim Isenberg for showing us the idea of using subcritical
perturbations of the conformal constraint equations. Also we thank
Piotr Chru{\'s}ciel for pointing out a mistake in an earlier version
of this article. Further, we would like to thank Bernd Ammann, Lars
Andersson, Erwann Delay, and Olivier Druet for helpful discussions.

\section{Proof of Theorem \ref{main} } \label{sec_proof_main}

In all of this section we assume that data is given on $M$ as
specified in \eqref{givendata} and we assume that \eqref{assumptions}
holds. We will prove that the system
\eqref{hamiltonian}-\eqref{momentum} then admits a solution $(\phi,W)$
with $\phi > 0$ if the limit equation does not admit any non-trivial
solution. This proof proceeds in several steps.

\subsection{The subcritical system}

Let $0< \ep < 1$. We begin by introducing the subcritical system where
the exponent of $\phi$ in \eqref{momentum} is decreased by $\ep$,
\begin{subequations}
  \begin{align}
    \frac{4(n-1)}{n-2} \Delta \phi + R \phi &= -\frac{n-1}{n} \tau^2
    \phi^{N-1} + |\sigma + LW|^2 \phi^{-N-1},
    \label{hamiltonian_ep}   \\
    -\frac{1}{2} L^* L W &= \frac{n-1}{n} \phi^{N-\ep} d\tau.
    \label{momentum_ep}
  \end{align}
\end{subequations}
In the following Proposition we follow the method of Maxwell
\cite{MaxwellNonCMC} to show existence of solutions of these
subcritical equations.

\begin{prop} \label{existence_subcrit} Let data be given on $M$ as
  specified in \eqref{givendata} and assume that \eqref{assumptions}
  holds. Also, let $0 < \ep < 1$.  Then there exists at least one
  solution $(\phi, W)$ of the subcritical constraint equations
  \eqref{hamiltonian_ep}-\eqref{momentum_ep}.
\end{prop}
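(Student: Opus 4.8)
The plan is to solve the subcritical system \eqref{hamiltonian_ep}--\eqref{momentum_ep} by a fixed point argument in the spirit of Maxwell \cite{MaxwellNonCMC}. First I would set up the iteration map: given a $1$-form $W$, solve the vector equation \eqref{momentum_ep} for the Lichnerowicz equation's input. Concretely, for a fixed positive function $\phi$, the vector equation $-\frac{1}{2}L^*LW = \frac{n-1}{n}\phi^{N-\ep} d\tau$ has a unique solution $W = W_\phi \in W^{2,p}$ because $L^*L$ is elliptic and, thanks to the assumption that $(M,g)$ has no conformal Killing vector fields, has trivial kernel; elliptic regularity gives the $W^{2,p}$ bound in terms of $\|\phi^{N-\ep}d\tau\|_{L^p}$. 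Then, given this $W_\phi$, solve the Lichnerowicz equation \eqref{hamiltonian_ep} for a new positive function; by the sub- and supersolution method for the Lichnerowicz equation (see Maxwell, or the discussion around \eqref{hamiltonian}) this has a solution as soon as a positive supersolution exists, and under \eqref{assumptions} — in particular $\sigma \not\equiv 0$ when $Y(g)\geq 0$, and $\tau$ nowhere zero — one can produce global sub/supersolutions. This defines a map $\phi \mapsto \Phi(\phi)$ whose fixed points are exactly the solutions of the coupled system.

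The key structural point that makes the subcritical exponent useful is obtaining an a priori supersolution for the Lichnerowicz equation that is independent of the size of $\sigma + LW$. I would look for a constant supersolution $\phi \equiv \phi_+$: plugging a constant into \eqref{hamiltonian_ep}, the Laplacian term drops and one needs $R\phi_+ \geq -\frac{n-1}{n}\tau^2 \phi_+^{N-1} + |\sigma+LW|^2 \phi_+^{-N-1}$ pointwise. Because $W = W_\phi$ solves the vector equation, $|LW|$ is controlled by $\|\phi\|_{L^\infty}^{N-\ep}$, i.e. by $\phi_+^{N-\ep}$, so the last term on the right behaves like $\phi_+^{N-2\ep - N - 1} = \phi_+^{-1-2\ep}$ times a bounded constant — which for large $\phi_+$ is dominated by the negative term $-\frac{n-1}{n}(\inf\tau^2)\,\phi_+^{N-1}$ since $N - 1 > -1 - 2\ep$. (When $\tau$ has no sign one must be more careful and localize, but under \eqref{assumptions} $\tau$ vanishes nowhere, so $\inf\tau^2 > 0$.) This is precisely where decreasing the exponent from $N$ to $N-\ep$ pays off: it makes the coupling weak enough that a large constant is a supersolution regardless of the magnitude of the data, so the Schauder fixed point theorem applies on the order interval determined by this supersolution and a suitable (small) subsolution.

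Having the fixed-point map $\Phi$ defined on a closed convex set of positive functions bounded below and above (the order interval $[\phi_-, \phi_+]$ in, say, $C^0$ or $W^{1,p}$), I would verify the hypotheses of the Schauder fixed point theorem: $\Phi$ maps this set into itself (monotonicity of the Lichnerowicz solution operator with respect to sub/supersolutions, and the a priori bound from the previous paragraph), and $\Phi$ is continuous and compact (elliptic regularity upgrades $W^{1,p}$ bounds to $W^{2,p}$ bounds on both $W_\phi$ and the Lichnerowicz solution, and $W^{2,p} \hookrightarrow W^{1,p}$ is compact since $p > n$). A fixed point $\phi$ then yields, together with $W = W_\phi$, a solution of \eqref{hamiltonian_ep}--\eqref{momentum_ep}, and elliptic regularity bootstraps it to $W^{2,p}$. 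I expect the main obstacle to be the construction of the global sub/supersolution pair in the borderline cases of the sign of $Y(g)$ — ensuring a positive subsolution exists (this is where $\sigma \not\equiv 0$ is needed when $Y(g) \geq 0$) and that the Schauder set is nonempty and properly ordered — and the careful verification that the supersolution estimate closes uniformly over the iteration; the rest is standard elliptic theory.
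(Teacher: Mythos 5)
Your proposal follows essentially the same route as the paper's proof: the map $\phi \mapsto N_\ep(\phi)$ obtained by solving the vector equation and then the Lichnerowicz equation, a constant supersolution whose existence rests on the subcritical exponent, a sub/supersolution pair giving uniform two-sided bounds, and the Schauder fixed point theorem combined with elliptic estimates and Rellich--Kondrakov compactness — this is exactly the argument of the paper (following Maxwell). One arithmetic slip worth correcting: since $\|LW\|_{L^\infty} \leq C_0 \|d\tau\|_{L^p}\|\phi\|_{L^\infty}^{N-\ep}$, the term $|\sigma+LW|^2\phi_+^{-N-1}$ scales like $\phi_+^{2(N-\ep)-N-1}=\phi_+^{N-1-2\ep}$, not $\phi_+^{-1-2\ep}$; the domination by $\frac{n-1}{n}\tau_0^2\,\phi_+^{N-1}$ for $\phi_+$ large still holds, but only because $\ep>0$ shaves $2\ep$ off the exponent, whereas your stated exponent would (falsely) make a constant supersolution available even at $\ep=0$. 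Finally, two points you defer are exactly where the paper spends its effort: the continuity of the Lichnerowicz solution map $W\mapsto\psi$ is proved via a maximum-principle Lipschitz estimate (Lemma \ref{lmContinuity}), and the lower barrier must be made uniform over the invariant set, which the paper achieves through the auxiliary linear equation \eqref{equation_eta} when $Y(g)\geq 0$ and a prescribed-scalar-curvature conformal factor when $Y(g)<0$.
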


The proof of this proposition relies on a modified version of the
Schauder fixed point theorem, see for example \cite[Theorem 11.1, p.
279]{GilbargTrudinger}. Let $\phi \in L^\infty$, $\phi >0$. By
\cite[Proposition 5]{MaxwellNonCMC} there exists a unique $W \in W^{2,
  p}$ such that
\begin{equation}\label{momentum1}
  -\frac{1}{2} L^* L W = 
  \frac{n-1}{n} \phi^{N-\ep} d\tau,
\end{equation}
and by \cite[Proposition 2]{MaxwellNonCMC} there is a unique $\psi \in
W^{2, p}_+$ satisfying
\begin{equation}\label{hamiltonian1}
  \frac{4(n-1)}{n-2} \Delta \psi + R \psi = 
  -\frac{n-1}{n} \tau^2 \psi^{N-1} + |\sigma + LW|^2 \psi^{-N-1}.
\end{equation}
We define
\begin{equation*}
  N_\ep(\phi) \definedas \psi,
\end{equation*}
and prove the following result.

\begin{lemma} \label{supinf} There exists a constant $a_\ep>0$ such
  that for all $b \leq a_\ep$, there is a constant $K_b$ depending
  only on $b$, $g$, $\tau$, and $\sigma$, but not on $\ep$ such that
  \begin{equation*}
    K_b \leq N_\ep(\phi) \leq a_\ep
  \end{equation*}
  for any $\phi$ satisfying $0 < \phi \leq b$.  In addition if $Y(g) <
  0$, $K_b$ does not depend on $b$.
\end{lemma}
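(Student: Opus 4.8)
The plan is to establish the upper bound $N_\ep(\phi) \le a_\ep$ first, then the lower bound $N_\ep(\phi) \ge K_b$, using the monotonicity and sub/supersolution machinery for the Lichnerowicz equation (Maxwell \cite{MaxwellNonCMC}, Proposition 2). Recall that $\psi = N_\ep(\phi)$ is defined by solving the vector equation \eqref{momentum1} for $W = W(\phi)$ and then \eqref{hamiltonian1}; the key input is that $\|W\|_{W^{2,p}} \le C \|\phi\|_{L^\infty}^{N-\ep} \|d\tau\|_{W^{1,p}}$, so via Sobolev embedding $\||\sigma + LW|\|_{L^\infty}$ is bounded by a polynomial in $\|\phi\|_{L^\infty}$. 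The first step is therefore: for the upper bound, note that when $0 < \phi \le b \le a_\ep$ we have $\||\sigma + LW|^2\|_{L^\infty} \le A_0 + A_1 b^{2(N-\ep)}$ for constants $A_0, A_1$ depending on $g, \tau, \sigma$ (and $A_0 = 0$ if $\sigma \equiv 0$); one then seeks a constant supersolution $\psi_+ \equiv a_\ep$ of \eqref{hamiltonian1}, i.e. a constant $a$ with
\begin{equation*}
  R a \ge -\frac{n-1}{n}\tau^2 a^{N-1} + (A_0 + A_1 a^{2(N-\ep)}) a^{-N-1}
\end{equation*}
pointwise — but since $R$ need not be a favourable constant, this must be done by the standard trick of conformally changing so that the scalar curvature term is controlled (or by invoking Maxwell's global supersolution criterion directly). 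Concretely, I would use that the dominant balance as $a \to \infty$ is between $-\frac{n-1}{n}\tau^2 a^{N-1}$ (which is negative, hence helpful on the right) and $A_1 a^{2(N-\ep) - N - 1} = A_1 a^{N - 1 - 2\ep}$; since $N - 1 - 2\ep < N - 1$, for $a$ large the negative $\tau^2$-term dominates, so a large constant is a supersolution provided $\inf \tau^2 > 0$, which holds since $\tau$ vanishes nowhere on the compact $M$. This fixes $a_\ep$ (depending on $\ep$ through the exponent, hence the subscript), and $\ep$-independence is not claimed for the upper bound.

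For the lower bound, the idea is to produce a subsolution bounded below independently of $\ep$. When $Y(g) < 0$ one can choose a conformal representative with $R < 0$ (constant, say $R \equiv -1$ after scaling), and then a small constant $\psi_- \equiv k$ is a subsolution of \eqref{hamiltonian1}: one needs $-k \le -\frac{n-1}{n}\tau^2 k^{N-1} + |\sigma+LW|^2 k^{-N-1}$, and the last term is non-negative while $k^{N-1} \le k$ for $k \le 1$, so for $k$ small enough depending only on $\sup\tau^2$ and $\inf R$ this holds regardless of $W$ — giving a $K_b$ independent of both $b$ and $\ep$. When $Y(g) \ge 0$ we are in the case $\sigma \not\equiv 0$; here the $|\sigma + LW|^2 \psi^{-N-1}$ term blows up as $\psi \to 0$, but one must be careful that $LW$ could partially cancel $\sigma$. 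The cleanest route is: since $\|W\|_{W^{2,p}} \le C b^{N-\ep}$ and $b \le a_\ep$ is bounded, $\|LW\|_{L^\infty}$ is bounded by some $B(b)$; then where $|\sigma|$ is not too small (on a set of positive measure, since $\sigma \not\equiv 0$) one gets a pointwise lower bound of the form $|\sigma + LW|^2 \ge \frac12|\sigma|^2 - B(b)^2$, and combined with the maximum principle / Maxwell's subsolution criterion this yields a positive lower bound $K_b$ depending on $b$ (through $B(b)$) but not on $\ep$.

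The main obstacle I expect is the $Y(g) \ge 0$ lower bound: ensuring $K_b$ does not depend on $\ep$ requires that all the constants entering — the bound on $\|LW\|_{L^\infty}$, the subsolution threshold — depend on $\ep$ only through the already-controlled quantity $b \le a_\ep$, and crucially one needs a subsolution construction (rather than just a barrier) that survives the possible cancellation between $\sigma$ and $LW$; here I would lean on Maxwell's formulation where a global subsolution for the Lichnerowicz equation exists as soon as one has a positive lower bound on $\int_M |\sigma + LW|^2$ or a pointwise-positivity statement on a fixed domain, and I would need to check that the dependence on $W$ can be absorbed into $b$. The other, more routine, obstacle is the choice of conformal gauge to control the $R\psi$ term uniformly; this is standard (Yamabe) but must be done once and for all, independently of $\ep$, which is legitimate since the metric is fixed. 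Once the constant sub- and supersolutions are in hand, Maxwell's Proposition 2 gives $K_b \le \psi \le a_\ep$ and the lemma follows.
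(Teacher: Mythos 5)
Your upper-bound argument and the $Y(g)<0$ case are essentially the paper's: the constant supersolution $a_\ep$ is fixed by the same dominant-balance between $\frac{n-1}{n}\tau^2 a^{N-1}$ and the term $a^{N-1-2\ep}$ coming from the elliptic estimate $\|LW\|_{L^\infty}\le C_0\|d\tau\|_{L^p}\|\phi\|_{L^\infty}^{N-\ep}$ (no conformal change is needed there — the paper just imposes the pointwise inequality), and for $Y(g)<0$ the paper takes as subsolution the conformal factor $\eta$ from Aubin's prescribed-scalar-curvature theorem solving $\frac{4(n-1)}{n-2}\Delta\eta+R\eta=-\frac{n-1}{n}\tau^2\eta^{N-1}$, which is $W$-, $b$- and $\ep$-independent; your ``normalize $R$ to a negative constant, then a small constant works'' is the same idea up to conformal bookkeeping.

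The genuine gap is the lower bound when $Y(g)\ge 0$. Your pointwise estimate $|\sigma+LW|^2\ge \tfrac12|\sigma|^2-B(b)^2$ is vacuous: $B(b)$ is in general much larger than $\sup|\sigma|$, so this gives no positivity anywhere, and no fixed region of pointwise positivity of $|\sigma+LW|$ can be extracted this way because $LW$ may cancel $\sigma$ pointwise. The ingredient you gesture at but do not supply is the $L^2$-orthogonality: since $\sigma$ is trace-free and divergence-free, $\int_M\langle\sigma,LW\rangle\,dv=0$, hence
\begin{equation*}
  \int_M|\sigma+LW|^2\,dv=\int_M|\sigma|^2\,dv+\int_M|LW|^2\,dv\ \ge\ \int_M|\sigma|^2\,dv>0,
\end{equation*}
a bound independent of $W$, $b$ and $\ep$. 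The paper then converts this integral positivity into a subsolution: it solves the auxiliary linear equation $\frac{4(n-1)}{n-2}\Delta\eta+\left(R+\frac{n-1}{n}\tau^2\right)\eta=|\sigma+LW|^2$ (positivity of $\eta$ obtained by a conformal change making the scalar curvature nonnegative), takes $\theta_-=\al\eta$ with $\al$ small, and controls $\min\eta\ge C_2\int_M|\sigma+LW|^2\,dv\ge C_2\int_M|\sigma|^2\,dv$ from below and $\max\eta\le C_1\||\sigma+LW|^2\|_{L^p}$ from above, the latter bounded through $\|LW\|_{L^{2p}}\le C b^{N-\ep}$; this is exactly how $K_b$ ends up depending only on $b$, $g$, $\tau$, $\sigma$. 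Without the orthogonality identity and this explicit subsolution construction, your plan for the $Y(g)\ge0$ case does not close — you explicitly defer the step (``I would need to check that the dependence on $W$ can be absorbed into $b$''), and that deferred step is precisely the content of the lemma in this case.
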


All the arguments for this proof can be found in \cite[Proposition
10]{MaxwellNonCMC}, even though a slightly weaker conclusion is
formulated there. To prove Proposition \ref{existence_subcrit} above
we only need the weaker version, later we will apply Lemma
\ref{supinf} as stated here.

\begin{proof} 
  Let $\phi$ be a positive function and let $W$ satisfy the perturbed
  vector equation \eqref{momentum1}. Assume that $\theta_-$ is a
  subsolution and $\theta_+$ is a supersolution of Equation
  \eqref{hamiltonian1} with $\theta_- \leq \theta_+$. Then it is well
  known that that there is a solution $\theta$ of Equation
  \eqref{hamiltonian1} which satisfies $\theta_- \leq \theta \leq
  \theta_+$, see for example \cite{Isenberg}. Since solutions of
  \eqref{hamiltonian1} are unique for a given $W$ we conclude that
  $\theta = N_\ep(\phi)$ and $\theta_- \leq N_\ep(\phi) \leq
  \theta_+$. To prove Lemma \ref{supinf}, we show that $\theta_+ =
  a_\ep$ is a supersolution if $a_\ep$ is a large real number and if
  $0 < b \leq a_\ep$ is such that $\phi \leq b$, then we can find a
  subsolution $\theta_-$ whose minimum $K_b$ depends only on $b$.

  By standard elliptic theory, there exists a constant $C_0$ depending
  only on $g$ such that any solution $W$ of Equation \eqref{momentum1}
  satisfies
  \begin{equation} \label{ellip_estim} \| L W \|_{L^\infty} \leq C_0
    \|d\tau\|_{L^p} \| \phi \|_{L^\infty}^{N-\ep}.
  \end{equation}
  Let $a_\ep > 0$ be large enough so that
  \begin{equation*} 
    Ra_\ep + \frac{n-1}{n}\tau^2 a_\ep^{N-1} 
    - 2 \|\sigma \|_{L^\infty}^2 a_\ep^{-N-1} 
    - 2 C_0^2  \|d\tau\|_{L^p}^2 a_\ep^{N - 1 - 2\ep} 
    \geq 0,
  \end{equation*} 
  such an $a_\ep$ can be found since we assume that $\tau$ vanishes
  nowhere and $\ep > 0$. At this step we crucially use that the
  equations have been made subcritical, here is also the only place we
  use this fact. We set $\theta_+ \definedas a_\ep$, this is then a
  supersolution of Equation \eqref{hamiltonian1}. We now find a
  subsolution, for this let $b \leq a_\ep$ and assume that $\phi \leq
  b$.

  We first study the case when $Y(g) \geq 0$. Then after an
  appropriate conformal change $\overline{g} \definedas e^{2u} g$ the
  scalar curvature $\overline{R}$ is non-negative. As in
  \cite[Proposition 10]{MaxwellNonCMC} we consider the solution $\eta$
  of
  \begin{equation} \label{equation_eta} \frac{4(n-1)}{n-2} \Delta \eta
    + \left( R + \frac{n-1}{n} \tau^2 \right) \eta = |\sigma + LW|^2.
  \end{equation} 
  Under the conformal change this equation is equivalent to the
  equation
  \begin{equation*}
    \frac{4(n-1)}{n-2} \overline{\Delta} \overline{\eta} 
    + \left(\overline{R} + \frac{n-1}{n} e^{-2u} \tau^2 \right)
    \overline{\eta} 
    =  
    e^{-(n+2)u/2} |\sigma + LW|^2
  \end{equation*}
  for $\overline{\eta} \definedas e^{-(n-2)u/2} \eta$. Since
  $\overline{R} + \frac{n-1}{n} e^{-2u} \tau^2$ is positive we know
  there exists a positive solution $\overline{\eta}$ of this equation,
  and thus there exists a positive solution $\eta$ of Equation
  \eqref{equation_eta}. We set $\eta_\al \definedas \al\eta$ for a
  positive constant $\al$, so that
  \begin{equation*}
    \begin{split}
      &\frac{4(n-1)}{n-2} \Delta \eta_\al + R \eta_\al + \frac{n-1}{n}
      \tau^2 \eta_\al^{N-1}
      - |\sigma + LW|^2 \eta_\al^{-N-1} \\
      &\quad = \frac{n-1}{n} \tau^2 \left( \al^{N-1} \eta^{N-1} - \al
        \eta \right) + |\sigma + LW|^2 \left(\al - \al^{-N-1}
        \eta^{-N-1} \right) .
    \end{split}
  \end{equation*}
  Choose $\alpha$ small enough so that this is non-positive, then
  $\theta_- \definedas \alpha \eta$ is a subsolution of Equation
  \eqref{hamiltonian1}. The choice of the $\alpha$ depends only on
  $\max(\eta)$, so the constant $K_b \definedas \min \th_-$ depends
  only on $\max(\eta)$ and $\min(\eta)$. By \cite[Proposition
  9]{MaxwellNonCMC} we have
  \begin{equation*}
    \max(\eta) 
    \leq 
    C_1 \left\| |\sigma + LW|^2 \right\|_{L^p}
    \leq
    C'_1 \left(\left\| \sigma \right\|_{L^{2p}}^{\frac{1}{2}}
      + \left\| LW \right\|_{L^{2p}}^{\frac{1}{2}}\right) ,
  \end{equation*}
  and
  \begin{equation*}
    \min(\eta) 
    \geq 
    C_2 \int_M |\sigma + LW|^2 \, dv 
    \geq
    C_2 \int_M |\sigma|^2 \, dv, 
  \end{equation*}
  where $C_1,C_2$ depend only on $g$ and $\tau$. Using
  \eqref{ellip_estim} we can further estimate
  \begin{equation*}
    \left\|LW\right\|_{L^{2p}}
    \leq 
    \vol(M)^{\frac{1}{2p}} \left\|LW\right\|_{L^\infty}
    \leq 
    \vol(M)^{\frac{1}{2p}} C_0 \|d\tau\|_{L^p} \| \phi \|_{L^\infty}^{N-\ep}
    \leq
    C'_0 \|d\tau\|_{L^p} b^{N-\ep},
  \end{equation*}
  where $C'_0 \definedas \vol(M)^{\frac{1}{2p}} C_0$, and we conclude
  that $K_b$ only depends on $b$, $g$, $\tau$, and $\sigma$. This
  finishes the proof in the first case.

  Now we study the case when $Y(g) < 0$. By \cite[Theorem 6.7, p.
  197]{Aubin} there exists a positive function $\eta$ such that
  $\widetilde{g} \definedas \eta^{N-2} g$ has scalar curvature
  $\widetilde{R} = - \frac{n-1}{n} \tau^2$. The function $\eta$ solves
  the equation
  \begin{equation*}
    \frac{4(n-1)}{n-2} \Delta \eta + R \eta 
    = 
    -\frac{n-1}{n} \tau^2 \eta^{N-1},
  \end{equation*} 
  and it follows that $\theta_- \definedas \eta$ is a subsolution of
  Equation \eqref{hamiltonian1}. Note that in this case, $K_b = \min
  \theta_-$ does not depend on $b$. This ends the proof of Lemma
  \ref{supinf}.
\end{proof}

The next step is to prove that the map $N_\ep$ is continuous. For this
we define maps
\begin{equation*}
  \Upsilon_\epsilon: L^\infty_+ \ni \phi \mapsto W \in W^{2, p}
\end{equation*}
where $W$ is the unique solution of the perturbed vector equation
\eqref{momentum1} given $\phi$, and
\begin{equation*}
  \Lambda_\epsilon: C^1 \ni  W \mapsto \psi \in L^\infty_+
\end{equation*}
where $\psi$ is the unique solution of the Lichnerowicz equation
\eqref{hamiltonian1} given $W$. Note that $N_\ep = \Lambda_\ep \circ
I_p \circ \Upsilon_\ep$ where $I_p : W^{2,p} \to C^1$ is the Sobolev
injection.

\begin{lemma} \label{lmContinuity} The maps $\Upsilon_\epsilon,
  \Lambda_\epsilon$, and thus also $N_\ep$, are continuous.
\end{lemma}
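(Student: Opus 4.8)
The plan is to prove continuity of the two maps separately and then use that the composition with the Sobolev embedding $I_p\colon W^{2,p}\to C^1$ is continuous. For $\Upsilon_\ep$, the key observation is that the vector equation \eqref{momentum1} is \emph{linear} in $W$: the operator $-\tfrac12 L^*L$ is a self-adjoint elliptic operator which, under the assumption that $(M,g)$ has no conformal Killing vector fields (so $\ker L = 0$), is an isomorphism $W^{2,p}\to L^p$. Therefore $\Upsilon_\ep(\phi) = \left(-\tfrac12 L^*L\right)^{-1}\!\bigl(\tfrac{n-1}{n}\phi^{N-\ep}d\tau\bigr)$, and continuity reduces to showing that $\phi\mapsto \phi^{N-\ep}d\tau$ is continuous from $L^\infty_+$ to $L^p$. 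This follows from the elementary estimate $|\phi_1^{N-\ep}-\phi_2^{N-\ep}| \leq (N-\ep)\max(\phi_1,\phi_2)^{N-\ep-1}|\phi_1-\phi_2|$ on compact subsets of $L^\infty_+$, combined with $d\tau\in L^p$; one works on sets where $\phi$ is bounded above and below, which is the natural domain since $N_\ep$ will only be applied to such $\phi$. I would also record the a priori bound \eqref{ellip_estim} here as it controls $\|LW\|_{L^\infty}$ in terms of $\|\phi\|_{L^\infty}$.

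For $\Lambda_\ep$, the situation is the Lichnerowicz equation \eqref{hamiltonian1}, which is nonlinear in $\psi$ but for which uniqueness of the positive solution is already known (by \cite[Proposition 2]{MaxwellNonCMC}). The natural strategy is a compactness-plus-uniqueness argument: suppose $W_j\to W$ in $C^1$, and let $\psi_j = \Lambda_\ep(W_j)$. First I would establish uniform bounds $0 < c \leq \psi_j \leq C$ independent of $j$, using the sub/supersolution machinery of Lemma \ref{supinf} (the supersolution $a_\ep$ works uniformly since $\|LW_j\|_{L^\infty}$ is bounded along the convergent sequence, and a uniform positive lower bound comes likewise from the subsolution construction, noting $|\sigma+LW_j|$ is uniformly bounded). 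Given these $L^\infty$ bounds, the right-hand side $-\tfrac{n-1}{n}\tau^2\psi_j^{N-1} + |\sigma+LW_j|^2\psi_j^{-N-1}$ is bounded in $L^p$, so elliptic regularity for $\tfrac{4(n-1)}{n-2}\Delta + R$ gives a uniform $W^{2,p}$ bound on $\psi_j$; hence a subsequence converges in $C^1$ to some $\psi_\infty > 0$. Passing to the limit in the equation shows $\psi_\infty$ solves \eqref{hamiltonian1} with datum $W$, so by uniqueness $\psi_\infty = \Lambda_\ep(W)$. Since every subsequence has a further subsequence converging to the same limit, the full sequence converges, giving continuity.

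The main obstacle is the uniform positive lower bound $\psi_j \geq c > 0$ in the case $Y(g)\geq 0$, because there the subsolution in Lemma \ref{supinf} is of the form $\al\eta$ where $\eta$ solves \eqref{equation_eta} with right-hand side $|\sigma+LW_j|^2$, and one must check that $\min\eta$ stays bounded away from zero along the sequence --- this is exactly where the hypothesis $\sigma\not\equiv 0$ when $Y(g)\geq 0$ from \eqref{assumptions} enters, via the bound $\min(\eta)\geq C_2\int_M|\sigma|^2\,dv > 0$ already recorded in the proof of Lemma \ref{supinf}. When $Y(g) < 0$ this difficulty disappears since the subsolution $\eta$ does not depend on $W$ at all. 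Once the two-sided bounds are in place the rest is routine elliptic estimates and the uniqueness statement already cited, and the continuity of $N_\ep = \Lambda_\ep\circ I_p\circ\Upsilon_\ep$ follows since $I_p$ is a bounded (hence continuous) linear embedding.
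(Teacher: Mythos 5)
Your proposal is correct, and the $\Upsilon_\ep$ half coincides with the paper's (decompose through $\phi\mapsto\phi^{N-\ep}\mapsto \frac{n-1}{n}\phi^{N-\ep}d\tau$ and invert $-\frac12 L^*L$, which is an isomorphism since $\ker L=0$). For $\Lambda_\ep$, however, you take a genuinely different route. The paper proves a \emph{quantitative} statement: writing $u_i=\ln\psi_i$ for the solutions of \eqref{hamiltonian1} associated to $W_0,W_1$, subtracting the two equations, interpolating via $u_\lambda=(1-\lambda)u_0+\lambda u_1$, and bounding the resulting zeroth-order coefficient below by $c_b=\frac{1}{N-2}\frac{n-1}{n}\tau_0^2K_b^{N-2}$ (this is where the lower bound $\psi_i\geq K_b$ of Lemma \ref{supinf} and the hypothesis that $\tau$ vanishes nowhere enter), it applies the maximum principle to $u_1-u_0$ and obtains that $\Lambda_\ep$ is Lipschitz on every $C^1$-bounded set. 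Your argument is instead a soft compactness-plus-uniqueness scheme: uniform two-sided $L^\infty$ bounds on $\psi_j$ from sub/supersolutions, elliptic regularity to get a uniform $W^{2,p}$ bound, Rellich compactness, passage to the limit, and identification of the limit by the uniqueness in \cite[Proposition 2]{MaxwellNonCMC}, with the standard subsequence trick upgrading subsequential to full convergence. Both proofs hinge on the same two ingredients you isolate correctly --- the uniform positive lower bound (via the subsolution $\al\eta$ with $\min\eta\geq C_2\int_M|\sigma|^2\,dv>0$ when $Y(g)\geq 0$, which is exactly where $\sigma\not\equiv 0$ from \eqref{assumptions} is used, and via the Yamabe-type subsolution when $Y(g)<0$) and the positivity of $\tau_0^2=\min\tau^2$. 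What each buys: the paper's computation yields an explicit local Lipschitz modulus (stronger than needed, but self-contained once $K_b$ is known), while yours is more elementary, avoids the logarithmic subtraction, and gives only sequential continuity of $\Lambda_\ep:C^1\to L^\infty_+$, which is all the Schauder argument requires. One small point to make explicit in your write-up: the constant supersolution for a fixed $W_j$ should not be justified via \eqref{ellip_estim} (which is tied to the coupled system); rather, since $\tau^2\geq\tau_0^2>0$ and $|\sigma+LW_j|$ is uniformly bounded along the convergent sequence, a single large constant $a$ satisfies $Ra+\frac{n-1}{n}\tau_0^2a^{N-1}-\|\sigma+LW_j\|^2_{L^\infty}a^{-N-1}\geq 0$, and here no subcriticality is needed at all because $W_j$ is not recoupled to $\psi_j$; this is a minor rewording, not a gap.
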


\begin{proof}
  The map $\Upsilon_\ep: \phi \mapsto W = (-\frac{1}{2} L^* L )^{-1}
  ((n-1) \phi^{N-\ep} d\tau )$ can be decomposed as
  \begin{equation*}
    \begin{array}{cccccccc}
      L^\infty_+ & \to & L^\infty_+ & \to & L^p & \to & W^{2, p} \\
      \phi & \mapsto & \phi^{N-\ep} & \mapsto & (n-1) \phi^{N-\ep} d\tau 
      & \mapsto 
      & \left(-\frac{1}{2} L^* L\right)^{-1} \left((n-1) \phi^{N-\ep}
        d\tau\right)
    \end{array}
  \end{equation*}
  and the continuity of each arrow here is straightforward.

  Next we prove that $\Lambda_\ep$ is continuous, we give a proof
  which is independent of the Yamabe class of $g$. Let $W_i \in C^1$,
  $i=0,1$, be arbitrary. Denote the corresponding solutions of
  Equation \eqref{hamiltonian1} by $\psi_i$ and set $u_i = \ln
  \psi_i$. Then $u_i$ satisfies the equation
  \begin{equation*}
    \frac{4(n-1)}{n-2} 
    \left( \Delta u_i + \left| d u_i \right|^2 \right) + R
    = 
    - \frac{n-1}{n} \tau^2 e^{(N-2) u_i} 
    + \left|\sigma+L W_1\right|^2 e^{-(N+2) u_i}.
  \end{equation*}
  Subtracting the equation for $u_0$ from the equation for $u_1$ we
  obtain
  \begin{equation*}
    \begin{split}
      &\frac{4(n-1)}{n-2} \left( \Delta (u_1 - u_0) + \langle d (u_1 +
        u_0), d(u_1 - u_0) \rangle
      \right) \\
      &\qquad =
      - \frac{n-1}{n} \tau^2 \left( e^{(N-2) u_1} - e^{(N-2) u_0} \right) \\
      &\qquad \qquad + \left|\sigma+L W_1\right|^2 e^{-(N+2) u_1} -
      \left|\sigma+L W_2\right|^2 e^{-(N+2) u_0}.
    \end{split}
  \end{equation*}
  Setting $u_\lambda \definedas (1 - \lambda) u_0 + \lambda u_1$ for
  $0 \leq \lambda \leq 1$ the previous equation can be rewritten as
  \begin{equation*}
    \begin{split}
      &\left( \left|\sigma + L W_1 \right|^2 - \left|\sigma + L W_0
        \right|^2
      \right) e^{-(N+2) u_0} \\
      &\qquad = \frac{4(n-1)}{n-2} \left( \Delta (u_1 - u_0) + \langle
        d (u_1 + u_0), d(u_1 - u_0) \rangle
      \right) \\
      &\qquad \qquad + \frac{1}{N-2} \frac{n-1}{n} \tau^2 (u_1 - u_0)
      \int_0^1 e^{(N-2) u_\lambda} d\lambda \\
      &\qquad \qquad + \frac{1}{N+2} \left|\sigma + L W_1\right|^2
      (u_1 - u_0) \int_0^1 e^{-(N+2) u_\lambda} d\lambda .
    \end{split}
  \end{equation*}
  Note that by Lemma \ref{supinf} we have $\psi_i \geq K_b$, so
  $e^{(N-2) u_\lambda} \geq K_b^{N-2}$, and
  \begin{equation*}
    \begin{split}
      &\frac{1}{N-2} \frac{n-1}{n} \tau^2 \int_0^1 e^{(N-2) u_\lambda}
      d\lambda + \frac{1}{N+2} \left|\sigma + L W_1\right|^2
      \int_0^1 e^{-(N+2) u_\lambda} d\lambda \\
      &\qquad \geq c_b \definedas \frac{1}{N-2} \frac{n-1}{n} \tau_0^2
      K_b^{N-2},
    \end{split}
  \end{equation*}
  where $\tau_0^2 \definedas \min_M \tau^2$ is positive by assumption.
  We are now in a position to apply the maximum principle
  \cite[Theorem 8.1, p. 179]{GilbargTrudinger} to the function $u_1 -
  u_0$ and we obtain
  \begin{equation*}
    \begin{split}
      \left\| u_1 - u_0 \right\|_{L^\infty} &\leq \frac{1}{c_b}
      \left\| \left( \left|\sigma + LW_1\right|^2 - \left|\sigma +
            LW_0\right|^2
        \right) e^{-(N+2) u_0} \right\|_{L^\infty} \\
      &\leq \frac{b^{-N-2}}{c_b} \left\| \left|\sigma + L W_1\right|^2
        - \left|\sigma + L W_0\right|^2
      \right\|_{L^\infty}\\
      &\leq \frac{b^{-N-2}}{c_b} \left\|2 \sigma + L (W_1 + W_0)
      \right\|_{L^\infty}
      \left\|L(W_1 - W_0) \right\|_{L^\infty}\\
      &\leq \frac{2 b^{-N-2}}{c_b} \left\|2 \sigma + L (W_1 + W_0)
      \right\|_{L^\infty} \left\|W_1 - W_0\right\|_{C^1}.
    \end{split}
  \end{equation*}
  This proves that the map $\Lambda_\ep : W \mapsto \psi$ is Lipschitz
  continuous on any subset $\{ W \in C^1 \mid \| W \|_{C^1} \leq C \}$
  for any $C$ large enough.
\end{proof}

We now prove the existence of solutions to the subcritical system.

\begin{proof}[Proof of Proposition \ref{existence_subcrit}]
  We follow the arguments of \cite[Section 4.2]{MaxwellNonCMC} and
  define $U \definedas \{ \phi \in L^{\infty} \mid K_{a_\ep} \leq \phi
  \leq a_\ep \}$.  Here $K_{a_\ep}$ is the constant given by Lemma
  \ref{supinf} applied with $b=a_\ep$. From Lemma \ref{supinf},
  $N_\ep$ maps $U$ into itself. The set $\Upsilon_\ep(U)$ is bounded.
  Because of the Rellich-Kondrakov theorem, $I_p \circ \Upsilon_\ep
  (U)$ is relatively compact. Hence $N_\ep(U) = \Lambda_\ep \circ I_p
  \circ \Upsilon_\ep (U) \subset U$ is also relatively compact. From
  the fact that $U$ is a convex closed subset of $L^\infty$, the
  closed convex hull $V$ of $N_\ep(U)$ is contained in $U$, hence $V$
  is a compact convex subset and the map $N_\ep$ maps $V$ into itself.
  By the Schauder fixed point theorem, there exists a fixed point for
  $N_\ep$ in $V$, namely a couple $(\phi_\ep,W_\ep)$, $\phi_\ep>0$,
  solving the system \eqref{hamiltonian_ep}-\eqref{momentum_ep}. By
  standard regularity theorems we get $\phi_\ep \in W^{2, p}_+$ and
  $W_\ep \in W^{2, p}$.
\end{proof}

\subsection{Convergence of subcritical solutions}

\def\phitil{\widetilde{\phi}} \def\Wtil{\widetilde{W}}
\def\sigmatil{\widetilde{\sigma}}

Let $\epsilon \in [0, 1)$ be arbitrary and let $(\phi, W)$ be a
solution of the subcritical equations
\eqref{hamiltonian_ep}-\eqref{momentum_ep}. Define the {\em energy} of
this solution by
\begin{equation} \label{defEnergy} \gamma(\ph, W) \definedas \int_M
  |LW|^2 \, dv,
\end{equation}
and set $\gammatil \definedas \max \{\gamma, 1\}$. We first prove that
$\gammatil$ controls the $L^\infty$-norm of $\phi$. Note that
$\epsilon = 0$ is allowed, so the result applies also to solutions of
\eqref{hamiltonian}-\eqref{momentum}.

\begin{prop} \label{propBoundPhi} There exists a constant $C$ such
  that for any $\epsilon \in [0, 1)$ and any pair $(\phi, W)$ solving
  the subcritical equations \eqref{hamiltonian_ep}-\eqref{momentum_ep}
  it holds that
  \begin{equation*}
    \phi \leq C \gammatil(\ph, W)^{\frac{1}{2N}}.
  \end{equation*}
\end{prop}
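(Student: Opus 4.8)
The plan is to obtain the bound by analyzing the Lichnerowicz equation \eqref{hamiltonian_ep} at the point where $\phi$ achieves its maximum, combined with the elliptic estimate \eqref{ellip_estim} relating $\|LW\|$ to a power of $\|\phi\|_{L^\infty}$. First I would let $x_0 \in M$ be a point where $\phi$ attains its maximum $\phi_{\max} \definedas \|\phi\|_{L^\infty}$. At such an interior maximum we have $\Delta \phi(x_0) \geq 0$ (recall $\Delta$ is the non-negative Laplacian, so $\Delta \phi = -\mathrm{tr}\,\nabla^2\phi$, and at a max the Hessian is nonpositive). Evaluating \eqref{hamiltonian_ep} at $x_0$ and dropping the nonnegative term $\frac{4(n-1)}{n-2}\Delta\phi(x_0)$ gives
\begin{equation*}
  R(x_0)\,\phi_{\max} \leq -\frac{n-1}{n}\tau(x_0)^2 \phi_{\max}^{N-1} + |\sigma + LW|^2(x_0)\,\phi_{\max}^{-N-1}.
\end{equation*}
Since $\tau$ vanishes nowhere we may set $\tau_0^2 \definedas \min_M \tau^2 > 0$, and after rearranging and multiplying by $\phi_{\max}^{N+1}$ we get
\begin{equation*}
  \frac{n-1}{n}\tau_0^2\, \phi_{\max}^{2N} \leq |\sigma + LW|^2(x_0) - R(x_0)\phi_{\max}^{N+2} \leq \|\sigma + LW\|_{L^\infty}^2 + \|R^-\|_{L^\infty}\phi_{\max}^{N+2},
\end{equation*}
where $R^- = \max(-R,0)$. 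Strictly speaking one must be a little careful here, since $g \in W^{2,p}$ means $R \in L^p$ and $\phi \in W^{2,p} \subset C^1$, so a pointwise evaluation of \eqref{hamiltonian_ep} at a single point is not quite legitimate; the clean way is to work with the subsolution/supersolution structure, or equivalently to use the weak maximum principle of \cite[Theorem 8.1, p. 179]{GilbargTrudinger} as in the proof of Lemma \ref{lmContinuity}, testing against the equation for $u = \ln\phi$. Either route yields the same algebraic inequality.

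The second ingredient is to control $\|\sigma + LW\|_{L^\infty}$ by $\gammatil$. Here the subtlety is that \eqref{ellip_estim} bounds $\|LW\|_{L^\infty}$ by $\|\phi\|_{L^\infty}^{N-\ep}$, which is circular if used naively; instead I would use the energy. By definition $\gamma = \int_M |LW|^2\,dv$, and from the vector equation \eqref{momentum_ep}, standard elliptic estimates (as in \cite[Proposition 5]{MaxwellNonCMC}) give $\|W\|_{W^{2,p}} \leq C\|\phi^{N-\ep}d\tau\|_{L^p} \leq C\|\phi\|_{L^\infty}^{N-\ep}\|d\tau\|_{L^p}$, hence $\|LW\|_{L^\infty} \leq \|LW\|_{C^0} \leq C\|W\|_{W^{2,p}} \leq C\phi_{\max}^{N-\ep}\|d\tau\|_{L^p}$. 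Combining this with the definition of $\gamma$ is the wrong direction, so instead I would pair the vector equation with $W$ itself: integrating $-\tfrac12 L^*LW = \tfrac{n-1}{n}\phi^{N-\ep}d\tau$ against $W$ gives $\tfrac12\int_M |LW|^2\,dv = \tfrac{n-1}{n}\int_M \phi^{N-\ep}\langle d\tau, W\rangle\,dv$, which is still not directly what I want. The cleanest path: from the $L^\infty$ elliptic bound $\|LW\|_{L^\infty} \leq C\phi_{\max}^{N-\ep}$ we get $\gamma = \int |LW|^2 \leq C \phi_{\max}^{2(N-\ep)}\mathrm{vol}(M)$, so $\phi_{\max}^{2(N-\ep)} \geq c\,\gamma$; but this bounds $\gamma$ by $\phi_{\max}$, the opposite of the desired conclusion.

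The resolution — and the step I expect to be the main obstacle — is to feed the maximum-principle inequality back in. From the first display, $\phi_{\max}^{2N} \leq C(\|\sigma\|_{L^\infty}^2 + \|LW\|_{L^\infty}^2 + \phi_{\max}^{N+2})$. Since $N+2 < 2N$ for $n \geq 3$ (as $2N - (N+2) = N - 2 > 0$), the term $\phi_{\max}^{N+2}$ can be absorbed into the left side provided $\phi_{\max}$ is large; for $\phi_{\max}$ bounded there is nothing to prove. So we may assume $\phi_{\max} \geq 1$ and absorb, leaving $\phi_{\max}^{2N} \leq C(1 + \|LW\|_{L^\infty}^2)$. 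Now I still need $\|LW\|_{L^\infty}$ in terms of $\gammatil$ rather than $\phi_{\max}$. This is where one uses that $W$ solves \eqref{momentum_ep} with right-hand side $\phi^{N-\ep}d\tau$, and an interpolation/bootstrap: write $\|LW\|_{L^\infty} \leq C\|W\|_{W^{2,p}} \leq C\|\phi^{N-\ep}d\tau\|_{L^p} \leq C\phi_{\max}^{N-\ep}\|d\tau\|_{L^p}$, then observe $\phi_{\max}^{N-\ep} = \phi_{\max}^{-\ep}\cdot\phi_{\max}^{N} \leq \phi_{\max}^{N}$ when $\phi_{\max}\geq 1$, and separately $\int|LW|^2 = \gamma$ forces, via reverse Hölder or the $W^{2,p}\hookrightarrow C^0$ estimate run in reverse through the equation's $L^2$–$L^p$ structure, that $\|LW\|_{L^\infty} \leq C\gammatil^{1/2}$ directly — indeed from $\|LW\|_{L^\infty}\le C\phi_{\max}^{N-\ep}$ and the trivial $\gamma \le C\|LW\|_{L^\infty}^2$ we cannot conclude, so instead one notes $\|W\|_{W^{2,p}}$ and $\|LW\|_{L^2}$ are both controlled by $\|\phi^{N-\ep}d\tau\|_{L^p}$ up to constants, so $\|LW\|_{L^\infty}^2 \le C\|LW\|_{L^2}^2 = C\gamma \le C\gammatil$ — this last comparison $\|LW\|_{L^\infty} \le C\|LW\|_{L^2}$ holds because both sides are comparable to $\|\phi^{N-\ep}d\tau\|_{L^p}$ through the elliptic estimates for $(-\tfrac12 L^*L)^{-1}$. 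Plugging this into $\phi_{\max}^{2N} \leq C(1 + \|LW\|_{L^\infty}^2) \leq C(1 + \gamma) \leq C\gammatil$ gives $\phi_{\max} \leq C\gammatil^{1/2N}$, which is exactly the claim. I would present the argument in that order: (i) maximum-principle inequality for $\phi_{\max}$; (ii) absorption of the $\phi_{\max}^{N+2}$ term using $N-2>0$, reducing to $\phi_{\max}\geq 1$; (iii) elliptic control $\|LW\|_{L^\infty}^2 \leq C\gamma$ via the equation \eqref{momentum_ep}; (iv) combine. The genuinely delicate point is step (iii) — making precise that the $C^0$-norm of $LW$ is controlled by the $L^2$-norm of $LW$ using only that $W$ solves the vector equation — and one should double-check whether it is cleaner to bound $\|\sigma + LW\|_{L^\infty}^2$ by $\gammatil$ directly or to carry $\|\sigma\|_{L^\infty}$ as a harmless additive constant absorbed into $C$.
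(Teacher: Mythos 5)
Your steps (i)--(ii) (the maximum-principle inequality at a maximum of $\phi$ and the absorption of $\phi_{\max}^{N+2}$ using $N-2>0$) are fine in spirit, but step (iii) is a genuine gap, and it is precisely the crux of the proposition. You claim $\|LW\|_{L^\infty}\le C\|LW\|_{L^2}$ ``because both sides are comparable to $\|\phi^{N-\ep}d\tau\|_{L^p}$ through the elliptic estimates for $(-\tfrac12 L^*L)^{-1}$.'' Elliptic estimates only give the one-sided bound $\|W\|_{W^{2,p}}\le C\|\phi^{N-\ep}d\tau\|_{L^p}$, hence an \emph{upper} bound for $\|LW\|_{L^\infty}$ by the data; they do not give a \emph{lower} bound of the data norm by a weaker norm of the solution. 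The comparability you need amounts to $\|\phi^{N-\ep}d\tau\|_{L^p}\le C\|LW\|_{L^2}$, i.e.\ bounding the $L^p$ norm of the second-order quantity $L^*LW$ by the $L^2$ norm of the first-order quantity $LW$; no such reverse elliptic estimate holds (the inverse operator is smoothing, so the solution norm is weaker than the data norm — think of a right-hand side $\phi^{N-\ep}d\tau$ concentrating on a small ball, for which $\|LW\|_{L^\infty}/\|LW\|_{L^2}$ is not controlled). Without (iii) your argument collapses back into the circularity you yourself flagged: the energy $\gamma$ only controls $LW$ in $L^2$, while the pointwise Lichnerowicz inequality needs $LW$ in $L^\infty$.

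Bridging exactly this gap is what the paper's proof is about, and it cannot be done in one step. The paper rescales $\widetilde\phi=\gammatil^{-\frac1{2N}}\phi$, $\widetilde W=\gammatil^{-\frac12}W$, $\widetilde\sigma=\gammatil^{-\frac12}\sigma$ so that $\int_M|L\widetilde W|^2\,dv\le 1$, then multiplies the rescaled Lichnerowicz equation \eqref{hamiltonian_ep_rescaled} by $\widetilde\phi^{\,N+1}$ and integrates; the factor $\gammatil^{-\frac1n}\ge$ in front of the $\Delta$- and $R$-terms (handled by H\"older with $R\in L^n$, not $L^\infty$ — note $R$ is only in $L^p$, so your $\|R^-\|_{L^\infty}$ is itself problematic) yields an $L^2$ bound on $\widetilde\phi^{\,N}$. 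It then runs a Moser-type iteration alternating between the two equations: $\widetilde\phi^{\,N}\in L^{p_i}$ gives $\widetilde W\in W^{2,q_i}$ by the vector equation, hence $L\widetilde W\in L^{r_i}$ by Sobolev, which fed into the Lichnerowicz equation multiplied by $\widetilde\phi^{\,N+1+Nk_i}$ gives $\widetilde\phi^{\,N}\in L^{p_{i+1}}$ with $p_{i+1}/p_i>1$. Only after finitely many steps, when $q_{i_0}>n$ makes $L\widetilde W$ bounded in $C^\alpha$, does the pointwise maximum-principle argument (your step (i)) close the proof. So the structure you propose is the final step of the paper's argument, but the $L^\infty$ control of $LW$ by $\gammatil$ that you assume is exactly what must be earned by the rescaling and the bootstrap.
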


\begin{proof} 
  To abbreviate we set $\gammatil \definedas \gammatil(\ph, W)$. We
  rescale $\ph$, $W$ and $\sigma$ as follows,
  \begin{equation*}
    \phitil \definedas \gammatil^{-\frac{1}{2N}} \ph, \quad
    \Wtil \definedas \gammatil^{-\frac{1}{2}} W, \quad
    \sigmatil \definedas \gammatil^{-\frac{1}{2}} \sigma.
  \end{equation*}
  The subcritical equations \eqref{hamiltonian_ep}-\eqref{momentum_ep}
  can then be rewritten as
  \begin{subequations}
    \begin{align}
      \label{hamiltonian_ep_rescaled}
      \frac{1}{\gammatil^{\frac{1}{n}}} \left(\frac{4(n-1)}{n-2}
        \Delta \phitil + R \phitil \right) + \frac{n-1}{n} \tau^2
      \phitil^{N-1} &=
      \left| \sigmatil + L\Wtil \right|^2 \phitil^{-N-1} , \\
      \label{momentum_ep_rescaled}
      -\frac{1}{2} L^*L \Wtil &= \frac{n-1}{n}
      \gammatil^{-\frac{\epsilon}{2N}} \phitil^{N-\epsilon} d\tau.
    \end{align}
  \end{subequations}
  Due to the rescaling we have
  \begin{equation*}
    \int_M \left|L \Wtil\right|^2 \, dv \leq 1.
  \end{equation*}
  The proof proceeds in three steps.

\begin{step} 
  Bound for the $L^2$-norm of $\phitil^N$ \label{stBound1}
\end{step}

Multiplying equation \eqref{hamiltonian_ep_rescaled} by
$\phitil^{N+1}$ and integrating over $M$, we obtain
\begin{equation*}
  \begin{split}
    &\frac{1}{\gammatil^{\frac{1}{n}}} \int_M \left(
      \frac{4(n-1)}{n-2} \phitil^{N+1} \Delta \phitil + R
      \phitil^{N+2} \right) \, dv
    + \frac{n-1}{n} \int_M \tau^2 \phitil^{2N} \, dv \\
    &\qquad =
    \int_M \left|\sigmatil + L\Wtil\right|^2 \, dv \\
    &\qquad = \int_M \left|\sigmatil\right|^2 \, dv
    + \int_M \left|L\Wtil\right|^2 \, dv \\
    &\qquad \leq \int_M \left|\sigmatil\right|^2 \, dv + 1.
  \end{split}
\end{equation*}
Integration by parts tells us that the first term in the first
integral is non-negative. We set $\tau_0^2 \definedas \min_M \tau^2$,
which is positive by assumption, and conclude that
\begin{equation*}
  \frac{1}{\gammatil^{\frac{1}{n}}}\int_M R \phitil^{N+2} \, dv
  + 
  \frac{n-1}{n} \tau_0^2\int_M \phitil^{2N} \, dv 
  \leq
  \int_M \left|\sigmatil\right|^2 \, dv + 1. 
\end{equation*}
Here the first term can be estimated using the H\"older inequality
\begin{equation*}
  \left|\int_M R \phitil^{N+2} \, dv \right| 
  \leq 
  \left( \int_M |R|^{\frac{2N}{N-2}} \, dv \right)^{\frac{N-2}{2N}} 
  \left( \int_M \phitil^{2N} \, dv \right)^{\frac{N+2}{2N}} .
\end{equation*}
Remark that $\frac{2N}{N-2} = n < p$, hence, setting
\begin{equation*}
\lambda = \left|\int_M \left|R\right|^n \right|^\frac{1}{n} < \infty,
\end{equation*}
we obtain:
\begin{equation*}
  -\frac{\lambda}{\gammatil^{\frac{1}{n}}} 
  \left( \int_M \phitil^{2N} \, dv \right)^{\frac{N+2}{2N}} 
  + 
  \frac{n-1}{n} \tau_0^2\int_M \phitil^{2N} \, dv 
  \leq 
  \int_M \left|\sigmatil\right|^2 \, dv + 1.
\end{equation*}
Since $\frac{N+2}{2N} < 1$ and $\gammatil \geq 1$ by definition, we
obtain that the $L^2$-norm of $\phitil^N$ is bounded independently of
the choice of $\epsilon$, $\ph$, and $W$.

\begin{step} 
  Induction step
\end{step}

We shall now show that the previous $L^2$-estimate for $\ph^N$ can be
iteratively improved to give $L^p$-estimates for $p \geq 2$. Set $p_0
= 2$, we construct inductively an increasing sequence $p_i$ such that
$\phitil^N$ is bounded in the $L^{p_i}$-norm. Assume by induction that
$\phitil^N$ is bounded in $L^{p_i}$. We first improve our estimate for
$\Wtil$. From Equation \eqref{momentum_ep_rescaled} we get
\begin{equation*}
  \frac{1}{2} \left\|L^*L \Wtil\right\|_{L^{q_i}} 
  \leq 
  \frac{n-1}{n}
  \left\| \phitil^{N - \epsilon}\right\|_{L^{p_i}}
  \left\| d\tau \right\|_{L^p},
\end{equation*}
where $q_i$ is such that $\frac{1}{q_i}=\frac{1}{p_i}+\frac{1}{p}$.
Note that the sequence $q_i$ is increasing since $p_i$ is and that
$\frac{1}{q_0} = \frac{1}{2} + \frac{1}{p} < 1$ since $p > n \geq 3$.
{}From Young's inequality, we infer
\begin{equation*}
  \phitil^{N-\epsilon} 
  \leq 
  \frac{N - \epsilon}{N} \phitil^N + \frac{\epsilon}{N},
\end{equation*}
hence
\begin{equation*}
  \left\|\phitil^{N-\epsilon}\right\|_{p_i}
  \leq 
  \frac{N-\epsilon}{N} \left\|\phitil^N\right\|_{p_i} 
  + \frac{\epsilon}{N} \vol(M)^{\frac{1}{p_i}} 
  \leq \left\|\phitil^N\right\|_{p_i}
  + \frac{1}{N} \max\left\{1, \vol(M)\right\}. 
\end{equation*}
{}From standard elliptic estimates and the fact that $(M,g)$ admits no
conformal Killing vector field, there exists a constant $C_i$ such
that  
\begin{equation*}
  \left\|\Wtil\right\|_{W^{2, q_i}} 
  \leq 
  C_i 
  \left(\left\| \phitil^N \right\|_{p_i} 
    + \frac{1}{N} \max \left\{1, \vol(M) \right\} \right)
  \left\| d\tau \right\|_{L^p}. 
\end{equation*}
Assume that $q_i < n$. By the Sobolev injection, we deduce that
$L\Wtil$ is bounded in $L^{r_i}$ where $r_i$ is such that
$\frac{1}{r_i} = \frac{1}{q_i} - \frac{1}{n}$. We now estimate
$\phitil^N$ by means of the rescaled Lichnerowicz equation
\eqref{hamiltonian_ep_rescaled}. The method is similar to Step
\ref{stBound1}. We multiply \eqref{hamiltonian_ep_rescaled} by
$\phitil^{N+1+N k_i}$ where $k_i$ satisfies $\frac{2}{r_i} +
\frac{k_i}{p_i} = 1$ and integrate over $M$,
\begin{equation*}
  \begin{split}
    &\int_M \left| \sigmatil + L\Wtil \right|^2 \phitil^{N k_i} \, dv \\
    &\qquad = \frac{1}{\gammatil^{\frac{1}{n}}}\int_M \left(
      \frac{4(n-1)}{n-2} \phitil^{N+1+N k_i} \Delta \phitil
      + R \phitil^{N+2+N k_i}\right) \, dv \\
    &\qquad \qquad
    + \frac{n-1}{n} \int_M \tau^2 \phitil^{2N+N k_i} \, dv, \\
  \end{split}
\end{equation*}
{}from which we find
\begin{equation*}
  \begin{split}
    &2 \int_M \left(\left|\sigmatil\right|^2
      + \left|L\Wtil\right|^2\right) \phitil^{N k_i} \, dv \\
    &\qquad \geq \frac{1}{\gammatil^{\frac{1}{n}}} \int_M \left(
      \frac{4(n-1)}{n-2} \frac{N+1+N k_i}{\left(\frac{N + N
            k_i}{2}+1\right)^2} \left| d\phitil^{\frac{N+N k_i}{2}+1}
      \right|^2
      + R \phitil^{N+2+N k_i}\right) \, dv \\
    &\qquad \qquad
    + \frac{n-1}{n} \tau_0^2 \int_M \phitil^{2N+N k_i} \, dv \\
    &\qquad \geq \frac{1}{\gammatil^{\frac{1}{n}}} \int_M R
    \phitil^{N+2+N k_i} \, dv +
    \frac{n-1}{n} \tau_0^2 \int_M \phitil^{2N+N k_i} \, dv. \\
  \end{split}
\end{equation*}
By a method similar to the one we used in Step \ref{stBound1}, we
obtain that
\begin{equation*}
  \frac{1}{\gammatil^{\frac{1}{n}}}\int_M R \phitil^{N+2+N k_i} \, dv 
  \geq 
  - \frac{C_i}{\gammatil^{\frac{1}{n}}} 
  \left(\int_M \phitil^{(2 + k_i)N} \, dv \right)^{\frac{2+(1+k_i)N}{(2+k_i)N}}
\end{equation*}
for some constant $C_i$ that depends only on $g$ and $k_i$. We get the
following inequality
\begin{equation} \label{eqMainInequality}
  \begin{split}
    &- \frac{C_i}{\gammatil^{\frac{1}{n}}} \left(\int_M \phitil^{(2 +
        k_i)N} \, dv \right)^{\frac{2+(1+k_i) N}{(2+k_i)N}}
    + \frac{n-1}{n} \tau_0^2 \int_M \phitil^{2N+N k_i} \, dv \\
    &\qquad \leq 2 \int_M \left( \left| \sigmatil \right|^2 + \left|
        L\Wtil \right|^2 \right) \phitil^{N k_i} \, dv .
  \end{split}
\end{equation}
Here we have that $\left|\sigmatil\right|^2 + \left|\Wtil\right|^2$ is
bounded in $L^{\frac{r_i}{2}}$ and $\phitil^{N k_i}$ is bounded in
$L^{\frac{p_i}{k_i}}$, by the Young inequality we conclude that the
right hand side of \eqref{eqMainInequality} is bounded. From the fact
that $N > 2$, we have
\begin{equation*}
  \frac{2+(1+k_i)N}{(2+k_i)N} < 1.
\end{equation*} 
Hence by Inequality \eqref{eqMainInequality}, we conclude that
$\phitil^N$ is bounded in $L^{p_{i+1}}$ where $p_{i+1} = 2+k_i$.  A
simple calculation yields
\begin{equation*}
  \frac{p_{i+1}}{p_i} 
  = 
  1 + 2 \left(\frac{1}{n} - \frac{1}{p}\right) > 1.
\end{equation*}
Since
\begin{equation*}
  \frac{1}{q_i} 
  = 
  \frac{1}{p_i} + \frac{1}{p} < \frac{1}{p_i} + \frac{1}{n}
\end{equation*}
there exists an $i_0 > 0$ such that $q_{i_0} \geq n$ and $q_{i_0 - 1}
< n$. When $q_i > n$, the Sobolev space $W^{1, q_i}$ embeds into some
H\"older space so the strategy changes. This is the content of the
next step.

\begin{step}
  $L^\infty$ bound
\end{step}

Assume first that $q_{i_0} > n$. Since $\Wtil$ is bounded in
$W^{2,q_{i_0}}$ we know that $L\Wtil$ is bounded in the
$C^\alpha$-norm where $\alpha \definedas 1 - \frac{n}{q_{i_0}}$.
{}From the fact that the Laplacian acting on functions only involves
first order derivatives of the metric, it can be easily seen that the
function $\phitil$ is in $C^{2, \alpha}$. We can thus apply the
classical maximum principle to the equation
\eqref{hamiltonian_ep_rescaled}.  Let $x \in M$ be such that $\phitil$
attains its maximum value at $x$. At such a point, the following
inequality holds
\begin{equation*}
  \frac{1}{\gammatil^{\frac{1}{n}}} R \phitil 
  + \frac{n-1}{n} \tau^2 \phitil^{N-1}  
  \leq 
  \left|\sigmatil + L\Wtil\right|^2 \phitil^{-N-1},
\end{equation*}
or
\begin{equation*}
  \frac{1}{\gammatil^{\frac{1}{n}}} R \phitil^{N+2} 
  + \frac{n-1}{n} \tau^2 \phitil^{2 N} 
  \leq 
  \left|\sigmatil + L\Wtil\right|^2.
\end{equation*}
Hence $\phitil$ is bounded. The case $q_{i_0} = n$ can be avoided by
slightly decreasing the value of $p$.

We have proved that $\phitil$ is bounded by some constant which
depends only on $g$, $\tau$, and $\sigma$. Since $\ph =
\gammatil^{\frac{1}{2N}} \phitil$ this proves Proposition
\ref{propBoundPhi}.
\end{proof}

We can now study what happens when $\epsilon \to 0$. This is the
content of the next two lemmas. We will see that the behavior is
determined by the fact that the corresponding energies $\gamma$ are
bounded or not.
\begin{lemma} \label{lmBoundedEnergy} Assume that there exists
  sequences $\epsilon_i$ and $(\ph_i, W_i)$ such that $\epsilon_i \geq
  0$, $\epsilon_i \to 0$, and $(\ph_i, W_i)$ is a solution of the
  subcritical equations \eqref{hamiltonian_ep}-\eqref{momentum_ep}
  with $\epsilon = \epsilon_i$. Furthermore assume $\gamma(\ph_i,
  W_i)$ is bounded. Then there exists a subsequence of the $(\ph_i,
  W_i)$ which converges in the $W^{2, p}$-norm to a solution
  $(\ph_{\infty}, W_{\infty})$ of the conformal constraint equations
  \eqref{hamiltonian}-\eqref{momentum}.
\end{lemma}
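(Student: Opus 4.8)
The plan is to convert the hypothesis that $\gamma(\phi_i,W_i)$ is bounded into uniform two-sided bounds on $\phi_i$, to bootstrap these into $W^{2,p}$ bounds on the pair $(\phi_i,W_i)$, to extract a convergent subsequence, and finally to check that the limit solves \eqref{hamiltonian}-\eqref{momentum}. First I would establish the $L^\infty$ control of $\phi_i$. Since $\gamma(\phi_i,W_i)=\int_M|LW_i|^2\,dv$ is bounded, so is $\widetilde\gamma(\phi_i,W_i)$, and Proposition~\ref{propBoundPhi} gives a constant $b_0$, independent of $i$, with $\phi_i\leq b_0$; this is the one place the bounded-energy hypothesis enters. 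For the lower bound, note that $(\phi_i,W_i)$ solving \eqref{hamiltonian_ep}-\eqref{momentum_ep} forces $W_i=\Upsilon_{\epsilon_i}(\phi_i)$ and, by uniqueness in the Lichnerowicz equation for given $W$, $\phi_i=N_{\epsilon_i}(\phi_i)$. Hence Lemma~\ref{supinf} applied with $b=b_0$ yields $\phi_i\geq K_{b_0}>0$ with $K_{b_0}$ independent of $i$, since the lower bound in Lemma~\ref{supinf} is $\epsilon$-independent and $a_\epsilon$ may be chosen larger than $b_0$.

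Next I would run a standard elliptic bootstrap. Feeding $0<K_{b_0}\leq\phi_i\leq b_0$ into the vector equation \eqref{momentum_ep}, its right-hand side $\frac{n-1}{n}\phi_i^{N-\epsilon_i}d\tau$ is bounded in $L^p$ uniformly in $i$ (because $\phi_i^{N-\epsilon_i}\leq\max\{1,b_0^N\}$ and $d\tau\in L^p$); as $(M,g)$ has no conformal Killing vector fields, $L^*L$ is invertible on $W^{2,p}$, so $\|W_i\|_{W^{2,p}}\leq C$ and hence $\|LW_i\|_{C^0}\leq C$ via the embedding $W^{2,p}\hookrightarrow C^1$. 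Rewriting the Lichnerowicz equation \eqref{hamiltonian_ep} as $\frac{4(n-1)}{n-2}\Delta\phi_i=-R\phi_i-\frac{n-1}{n}\tau^2\phi_i^{N-1}+|\sigma+LW_i|^2\phi_i^{-N-1}$, the right-hand side is bounded in $L^p$ uniformly in $i$ — using $R\in L^p$, $\tau,\sigma\in C^0$, the bounds on $\phi_i$ and $LW_i$, and crucially the lower bound $\phi_i\geq K_{b_0}$ to control $\phi_i^{-N-1}$ — so elliptic regularity gives $\|\phi_i\|_{W^{2,p}}\leq C$. Passing to a subsequence, $W_i\rightharpoonup W_\infty$ and $\phi_i\rightharpoonup\phi_\infty$ weakly in $W^{2,p}$ and strongly in $C^1$, the embedding $W^{2,p}\hookrightarrow C^1$ being compact since $p>n$; in particular $LW_i\to LW_\infty$ and $\phi_i\to\phi_\infty$ in $C^0$, $\phi_\infty\in W^{2,p}$, and $\phi_\infty\geq K_{b_0}>0$.

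Then I would pass to the limit and upgrade the convergence. In \eqref{momentum_ep} the left side tends to $-\frac12 L^*L W_\infty$ weakly in $L^p$, while $\phi_i^{N-\epsilon_i}\to\phi_\infty^N$ uniformly — the maps $t\mapsto t^{N-\epsilon_i}$ converge to $t\mapsto t^N$ uniformly on the compact interval $[K_{b_0},b_0]$ as $\epsilon_i\to 0$, with a Lipschitz constant uniform in $i$ — so $\phi_i^{N-\epsilon_i}d\tau\to\phi_\infty^N d\tau$ in $L^p$ and $W_\infty$ satisfies \eqref{momentum}; likewise every term of the rewritten Lichnerowicz equation converges in $L^p$, so $\phi_\infty>0$ satisfies \eqref{hamiltonian}, and $(\phi_\infty,W_\infty)\in W^{2,p}_+\times W^{2,p}$ is a genuine solution. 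To obtain strong $W^{2,p}$ convergence, subtract the limit equations from the $\epsilon_i$-equations: the difference of the Lichnerowicz right-hand sides tends to $0$ in $L^p$ (each term carries a factor going to $0$ in $C^0$, and $R\in L^p$), so $\Delta(\phi_i-\phi_\infty)\to 0$ in $L^p$ and, by the elliptic estimate, $\|\phi_i-\phi_\infty\|_{W^{2,p}}\to 0$; similarly $L^*L(W_i-W_\infty)\to 0$ in $L^p$ gives $\|W_i-W_\infty\|_{W^{2,p}}\to 0$.

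The only genuinely delicate point is keeping every bound uniform in $\epsilon_i$ as $\epsilon_i\to 0$: the upper bound on $\phi_i$ must come from the $\epsilon$-uniform estimate of Proposition~\ref{propBoundPhi}, and the lower bound must come from the $\epsilon$-independence built into Lemma~\ref{supinf}. Once $\phi_i$ is pinched between the two positive constants $K_{b_0}$ and $b_0$, the varying exponent $N-\epsilon_i$ causes no difficulty, and the remainder is a routine elliptic bootstrap followed by a weak-to-strong convergence argument.
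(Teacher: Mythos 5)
Your argument is correct and follows the same overall skeleton as the paper's proof: bounded energy plus Proposition \ref{propBoundPhi} gives a uniform $L^\infty$ bound on $\phi_i$, the vector equation then bounds $W_i$ in $W^{2,p}$, and compactness of $W^{2,p}\hookrightarrow C^1$ produces the convergent subsequence. Where you diverge is in how the convergence of $\phi_i$ and the solution property of the limit are obtained. The paper simply observes that $\phi_i=\Lambda(W_i)$ and invokes the Lipschitz continuity of the Lichnerowicz solution map (Lemma \ref{lmContinuity}) on $C^1$-bounded sets, so that $LW_i\to LW_\infty$ in $L^\infty$ immediately yields $\phi_i\to\phi_\infty=\Lambda(W_\infty)$, after which the vector equation gives $W^{2,p}$ convergence of $W_i$; all maximum-principle and lower-bound work is thus packaged inside Lemma \ref{lmContinuity}. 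You instead re-derive a uniform positive lower bound on $\phi_i$, run a direct elliptic bootstrap on both equations, pass to the limit in the equations themselves (including the uniform convergence $\phi_i^{N-\epsilon_i}\to\phi_\infty^N$), and upgrade to strong $W^{2,p}$ convergence by subtracting the limit equations. Your route is more self-contained and makes explicit the limit passage that the paper leaves implicit; the paper's route is shorter because the continuity lemma has already done the delicate work.

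One small caveat: you quote Lemma \ref{supinf} with $b=b_0$ to get $\phi_i\geq K_{b_0}$, but that lemma is formulated in the subcritical setting $0<\epsilon<1$ (the constant supersolution $a_\epsilon$ exists only because $\epsilon>0$), whereas your sequence allows $\epsilon_i=0$, which is exactly the case needed for the compactness assertion of Theorem \ref{main}. This is not a real gap: only the subsolution half of Lemma \ref{supinf} is needed for the lower bound, and its construction (the function $\eta$ and the choice of $\alpha$, or the Yamabe-negative subsolution) never uses $\epsilon>0$, only $\|LW_i\|_{L^{2p}}\leq C_0'\|d\tau\|_{L^p}b_0^{N-\epsilon_i}\leq C_0'\|d\tau\|_{L^p}\max\{1,b_0^N\}$; comparison of the solution with this subsolution then gives the $i$-independent lower bound. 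You should either say this explicitly or phrase the citation as using the subsolution construction from the proof of Lemma \ref{supinf} rather than its statement; the paper's own proof sidesteps the issue only because the corresponding bound is invoked inside Lemma \ref{lmContinuity}, which does not involve $\epsilon$ at all.
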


\begin{proof} {}From the previous proposition, we know that the
  functions $\ph_i$ are uniformly bounded for the $L^\infty$-norm.
  {}From Equation \eqref{momentum_ep}, the sequence $W_i$ is uniformly
  bounded in $W^{2,p}$. Since $p > n$ we conclude by the
  Rellich-Kondrakov theorem that the map $L : W^{2, p} \to L^\infty$
  is compact. Up to extracting a subsequence, we can assume that the
  sequence $L W_i$ converges for the $L^\infty$-norm to some $L
  W_{\infty}$. Hence, from Lemma \ref{lmContinuity}, the functions
  $\ph_i$ converge in the $W^{2,p}$-norm (and therefore in the
  $L^\infty$-norm) to some $\ph_{\infty}$. Equation
  \eqref{momentum_ep} finally implies that the sequence $W_i$
  converges to $W_{\infty}$ in the $W^{2,p}$-norm.
\end{proof}

\begin{lem} \label{lmUnboundedEnergy} Assume that there exists
  sequences $\epsilon_i$ and $(\ph_i, W_i)$ such that $\epsilon_i \geq
  0$, $\epsilon_i \to 0$, and $(\ph_i, W_i)$ is a solution of the
  subcritical equations \eqref{hamiltonian_ep}-\eqref{momentum_ep}
  with $\epsilon = \epsilon_i$. Furthermore assume $\gamma(\ph_i, W_i)
  \to \infty$. Then there exists a non-zero solution $W \in W^{2, p}$
  of the limit equation
  \begin{equation*}
    -\frac{1}{2} L^*L W 
    = \al_0 \sqrt{\frac{n-1}{n}} |LW| \frac{d\tau}{\tau}.
  \end{equation*}
  for some $\al_0 \in (0,1]$.
\end{lem}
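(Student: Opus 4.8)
The plan is to repeat the rescaling from the proof of Proposition~\ref{propBoundPhi} and pass to the limit $\ep_i\to 0$, the limit equation emerging as the limit of the rescaled vector equation~\eqref{momentum_ep_rescaled}. Since $\gamma(\ph_i,W_i)\to\infty$ we have $\gammatil_i\definedas\gammatil(\ph_i,W_i)=\gamma(\ph_i,W_i)$ for $i$ large. Set $\phitil_i\definedas\gammatil_i^{-1/(2N)}\ph_i$, $\Wtil_i\definedas\gammatil_i^{-1/2}W_i$ and $\sigmatil_i\definedas\gammatil_i^{-1/2}\sigma$, so that \eqref{hamiltonian_ep_rescaled}--\eqref{momentum_ep_rescaled} hold, $\int_M|L\Wtil_i|^2\,dv=1$, and $\phitil_i\le C$ by Proposition~\ref{propBoundPhi}. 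From \eqref{momentum_ep_rescaled} and this $L^\infty$-bound, the forms $\Wtil_i$ are bounded in $W^{2,p}$; after passing to a subsequence, $\Wtil_i\to W$ weakly in $W^{2,p}$ and strongly in $C^1$, $\gammatil_i^{-\ep_i/(2N)}\to\al_0$ for some $\al_0\in[0,1]$, and $F_i\definedas|\sigmatil_i+L\Wtil_i|^2\to|LW|^2$ in $C^0$ (since $\sigmatil_i\to 0$).

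The heart of the proof is to show that $\phitil_i\to\phitil_\infty$ almost everywhere, where
\[
  \phitil_\infty\definedas\left(\frac{n}{n-1}\,\frac{|LW|^2}{\tau^2}\right)^{\frac{1}{2N}}
\]
is the pointwise minimiser of $s\mapsto\frac{n-1}{nN}\tau^2 s^N+\frac1N F_i\,s^{-N}$, with minimal value $\frac2N\sqrt{\frac{n-1}{n}\tau^2 F_i}$ by the arithmetic--geometric inequality. Since $\phitil_i$ is the unique positive solution of the rescaled Lichnerowicz equation~\eqref{hamiltonian_ep_rescaled}, it minimises the associated energy
\[
  \mathcal{E}_i(\phi)\definedas\int_M\left[\gammatil_i^{-1/n}\Big(\tfrac{2(n-1)}{n-2}|d\phi|^2+\tfrac12 R\phi^2\Big)+\tfrac{n-1}{nN}\tau^2\phi^N+\tfrac1N F_i\,\phi^{-N}\right]dv
\]
over positive functions (classical for the defocusing Lichnerowicz equation; see \cite{Isenberg,MaxwellNonCMC}). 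Taking the competitor $\phi_\delta\definedas\max\{\phitil_\infty,\delta\}$, which is a Lipschitz function of $|LW|/|\tau|\in W^{1,p}$ and hence lies in $W^{1,p}_+$, and letting first $i\to\infty$ then $\delta\to 0$, one gets $\limsup_i\mathcal{E}_i(\phitil_i)\le\mathcal{E}_\infty\definedas\frac2N\sqrt{\frac{n-1}{n}}\int_M|\tau|\,|LW|\,dv$, the key point being that $F\phitil_\infty^{-N}$ remains bounded near the zero set of $LW$. Conversely, dropping the non-negative gradient term and using the pointwise arithmetic--geometric bound gives $\liminf_i\mathcal{E}_i(\phitil_i)\ge\mathcal{E}_\infty$. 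Hence $\mathcal{E}_i(\phitil_i)\to\mathcal{E}_\infty$, the non-negative integrand $\frac{n-1}{nN}\tau^2\phitil_i^N+\frac1N F_i\phitil_i^{-N}-\frac2N\sqrt{\frac{n-1}{n}\tau^2 F_i}$ converges to $0$ in $L^1$, and a further subsequence of $\phitil_i$ converges to $\phitil_\infty$ almost everywhere --- on $\{|LW|>0\}$ because there the integrand has a uniformly non-degenerate minimum, and on $\{|LW|=0\}$ because there $\tau^2\phitil_i^N\to 0$.

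With the a.e.\ convergence in hand, $\phitil_i^{N-\ep_i}\to\phitil_\infty^N$ in every $L^q$, $q<\infty$ (the sequence is bounded in $L^\infty$ and $\ep_i\to 0$), so that $\gammatil_i^{-\ep_i/(2N)}\phitil_i^{N-\ep_i}\,d\tau\to\al_0\phitil_\infty^N\,d\tau$ in $L^p$. Passing to the limit in~\eqref{momentum_ep_rescaled}, the right-hand side converges in $L^p$, hence by elliptic regularity (no conformal Killing fields) $\Wtil_i\to W$ strongly in $W^{2,p}$ and
\[
  -\frac12 L^*L W=\frac{n-1}{n}\,\al_0\,\phitil_\infty^N\,d\tau=\al_0\sqrt{\frac{n-1}{n}}\,\frac{|LW|}{|\tau|}\,d\tau,
\]
which, after possibly replacing $W$ by $-W$, is exactly Equation~\eqref{eqLimit1}. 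Finally, $\int_M|LW|^2\,dv=\lim_i\int_M|L\Wtil_i|^2\,dv=1$, so $W\not\equiv 0$; and $\al_0\neq 0$, for $\al_0=0$ would force $L^*LW=0$ and hence $\int_M|LW|^2\,dv=0$. Thus $\al_0\in(0,1]$, which completes the argument.

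The step I expect to be the main obstacle is the almost-everywhere convergence $\phitil_i\to\phitil_\infty$: Equation~\eqref{hamiltonian_ep_rescaled} is a singular perturbation problem, and because one has no uniform positive lower bound for $\phitil_i$, direct arguments by barriers or elliptic estimates do not apply --- the variational comparison is what makes it work, and the behaviour near the zero set of $LW$, where both $\phitil_i$ and $\phitil_\infty$ degenerate, is the delicate point.
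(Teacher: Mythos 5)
Your overall scheme (rescale, extract a $C^1$-limit $W$ with $\int_M|LW|^2\,dv=1$, identify the limit of $\phitil_i$ as $\bigl(\tfrac{n}{n-1}|LW|^2/\tau^2\bigr)^{1/(2N)}$, pass to the limit in \eqref{momentum_ep_rescaled}) is the same as the paper's, and the end-game (convergence of $\phitil_i^{N-\ep_i}$, elliptic regularity, $\al_0\neq 0$) is fine. The genuine gap is exactly at the step you call the heart of the proof: you assert, as if classical and with a citation to \cite{Isenberg,MaxwellNonCMC}, that the unique positive solution $\phitil_i$ of \eqref{hamiltonian_ep_rescaled} is the \emph{global} minimiser of the functional $\mathcal{E}_i$ over positive functions. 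Neither of those references contains such a statement (both construct solutions by sub/supersolutions and a fixed point, not variationally), and the claim is not automatic here: the integrand contains the term $\tfrac12\gammatil_i^{-1/n}R\phi^2$ with $R$ of no prescribed sign, so $\mathcal{E}_i$ is in general not convex on the cone of positive functions, and what one actually gets from the equation is only that the linearisation at $\phitil_i$ is positive, i.e.\ that $\phitil_i$ is a strict \emph{local} minimiser. Local minimality plus uniqueness of positive critical points does not give the comparison $\mathcal{E}_i(\phitil_i)\le\mathcal{E}_i(\max\{\phitil_\infty,\delta\})$ on which your whole $\limsup$/$\liminf$ argument rests: a global minimiser of the relaxed problem over non-negative functions could a priori vanish on part of the set where $|\sigmatil_i+L\Wtil_i|$ vanishes (where the $\phi^{-N}$ term exerts no penalty) and lie strictly below the positive solution, and ruling this out (or proving the minimality claim by some other route, e.g.\ after a conformal change making $R\ge0$, which however changes the functional) requires a real argument that is missing. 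Everything downstream (the $L^1$ excess argument, a.e.\ convergence split according to $\{|LW|>0\}$ and $\{|LW|=0\}$, dominated convergence in the vector equation) is sound, but it all hinges on this unproved inequality.

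For comparison, the paper avoids the variational characterisation altogether: it fixes $\epsilon>0$, takes arbitrary $C^2$ functions $\phitil_\pm$ squeezed between $\phitil_\infty\pm\frac{\epsilon}{2}$ and $\phitil_\infty\pm\epsilon$, and checks directly that for $i$ large $\phitil_+$ is a supersolution and $\phitil_-$ a subsolution (on the set where it exceeds the uniform lower barrier $\gamma^{-1/(2N)}K_b$ from Lemma \ref{supinf}) of \eqref{hamiltonian_ep_rescaled}, because the terms carrying the factor $\gammatil_i^{-1/n}$ and the difference $|\sigmatil_i+L\Wtil_i|^2-|L\Wtil_\infty|^2$ tend uniformly to zero; the maximum principle then gives $\phitil_--\epsilon\le\phitil_i\le\phitil_++\epsilon$, i.e.\ uniform convergence $\phitil_i\to\phitil_\infty$, with no positivity or minimality issues near the zero set of $LW$. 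If you want to keep your variational route you must either prove that the positive solution is the global minimiser (for instance by showing the minimiser of $\mathcal{E}_i$ over non-negative functions is everywhere positive and then invoking uniqueness, which is delicate precisely where $|\sigmatil_i+L\Wtil_i|$ vanishes and $R<0$), or replace that step by the barrier argument above.
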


Note that, heuristically, when $\al_0= 1$, this equation is obtained
from the constraint equations \eqref{hamiltonian}-\eqref{momentum} by
discarding the terms $\frac{4(n-1)}{n-2} \Delta \ph + R \ph$ and
$\sigma$ in the Lichnerowicz equation \eqref{hamiltonian}. These are
the terms that disappear in the limit $\gammatil \to \infty$, see
Equation \eqref{hamiltonian_ep_rescaled}. The constant $\al_0$ will
appear as a defect of scale invariance when dealing with subcritical
systems, that is when $\ep_i \to 0$ but $\ep_i \not=0$. More precisely,
$\al_0$ will be defined as $\al_0 \definedas \lim \gamma(\ph_i,
W_i)^{-\frac{\epsilon_i}{2N}}$.

\begin{proof}
  Arguing as in the previous lemma, the $1$-forms $\Wtil_i$ are
  uniformly bounded in $W^{2, p}$. Without loss of generality we can
  assume that $\gamma(\ph_i, W_i) > 1$ for all $i$, so that
  \begin{equation*}
    \int_M \left|L\Wtil_i\right|^2 \, dv = 1.
  \end{equation*}
  Up to extracting a subsequence, we can then assume that $\Wtil_i$
  converges in the $C^1$-norm to some $\Wtil_{\infty}$. Note that
  \begin{equation*}
    \int_M \left|L\Wtil_{\infty}\right|^2 \, dv = 1
  \end{equation*} 
  so $\Wtil_{\infty} \not\equiv 0$. All we have to show now is that
  the functions $\phitil_i$ converge in the $L^\infty$-norm to
  $\phitil_{\infty} \in L^\infty$, where $\phitil_{\infty}$ is given
  by
  \begin{equation*}
    \phitil_{\infty}^N 
    =
    \sqrt{\frac{n}{n-1}} \tau^{-1} \left|L \Wtil_{\infty}\right|,
  \end{equation*}
  in other words, $\phitil_{\infty}$ is such that
  \begin{equation*}
    \frac{n-1}{n} \tau^2 \phitil_{\infty}^{N-1} 
    = 
    \left|L\Wtil_{\infty}\right| \phitil_{\infty}^{-N-1} .
  \end{equation*}
  If such a statement is proven, passing to the limit in
  \eqref{momentum_ep_rescaled}, we see that $\Wtil_\infty$ is a
  solution of the limit equation with $\al_0 \definedas \lim
  \gamma(\ph_i, W_i)^{-\frac{\epsilon_i}{2N}}$ which belongs to
  $[0,1]$ since $\gamma(\ph_i, W_i)$ is assumed to go to infinity.
  By assumption $(M,g)$ has no conformal Killing fields, which
  excludes the possibility that $\al_0 = 0$. Choose $\epsilon > 0$, we
  will show that there exists $i_0$ such that
  \begin{equation*}
    \left|\phitil_i - \phitil_{\infty}\right| < \epsilon.
  \end{equation*}
  for all $i \geq i_0$. For this, take two arbitrary $C^2$ functions
  $\phitil_\pm$ such that
  \begin{equation*}
    \begin{aligned}
      \phitil_{\infty} - \epsilon &\leq \phitil_-
      \leq \phitil_{\infty} - \frac{\epsilon}{2}, \\
      \phitil_{\infty} + \frac{\epsilon}{2} &\leq \phitil_+
      \leq \phitil_{\infty} + \epsilon.\\
    \end{aligned}
  \end{equation*}
  We first show that $\phitil_+$ is a supersolution of the rescaled
  Lichnerowicz equation \eqref{hamiltonian_ep_rescaled} if $i$ is
  large enough. By the maximum principle, we conclude that
  \begin{equation*}
    \phitil_i \leq \phitil_+ \leq \phitil_{\infty} + \epsilon.
  \end{equation*}
  Note first that $\phitil_+ \geq \frac{\epsilon}{2} > 0$. Multiplying
  the rescaled Lichnerowicz equation \eqref{hamiltonian_ep_rescaled}
  by $\phitil_+^{N+1}$, we have to show that
  \begin{equation*}
    \frac{\phitil_+^{N+1}}{\gammatil^{\frac{1}{n}}} 
    \left( \frac{4(n-1)}{n-2} \Delta \phitil_+ + R \phitil_+ \right) 
    + 
    \frac{n-1}{n} \tau^2 \phitil_+^{2N} 
    \geq 
    \left| \sigmatil + L\Wtil_i \right|^2.
  \end{equation*}
  Since
  \begin{equation*}
    \phitil_+^{2N} 
    \geq 
    \left(\phitil_{\infty} + \frac{\epsilon}{2}\right)^{2N} 
    \geq
    \phitil_{\infty}^{2N} + \left(\frac{\epsilon}{2}\right)^{2N}
  \end{equation*}
  the previous inequality will be satisfied provided that
  \begin{equation*}
    \frac{\phitil_+^{N+1}}{\gammatil^{\frac{1}{n}}}
    \left(\frac{4(n-1)}{n-2} \Delta \phitil_+ + R \phitil_+\right) 
    + 
    \left(\frac{\epsilon}{2}\right)^{2N} 
    \geq 
    \left| \sigmatil + L\Wtil_i \right|^2 
    - 
    \left|L\Wtil_{\infty}\right|^2.
  \end{equation*}
  Now note that both the terms
  \begin{equation*}
    \frac{\phitil_+^{N+1}}{\gammatil^{\frac{1}{n}}}
    \left(\frac{4(n-1)}{n-2} \Delta \phitil_+ + R \phitil_+\right)
  \end{equation*}
  and
  \begin{equation*}
    \left|\sigmatil + L\Wtil_i\right|^2 - \left|L\Wtil_{\infty}\right|^2
  \end{equation*} 
  tend uniformly to zero due to the facts that $\phitil_+ \in C^2$,
  $\gammatil_i \to \infty$ and $\Wtil_i \to \Wtil_{\infty}$. This
  proves that there exists $i_0$ such that for all $i \geq i_0$,
  $\phitil_+$ is a supersolution.

  We now prove that $\phitil_- \leq \phitil_i$ for $i$ large enough.
  Note first that $\phitil_i \geq \gamma^{-\frac{1}{2N}} K_b$ so this
  inequality will be fulfilled if we show that $\phitil_-$ is a
  subsolution of the Lichnerowicz equation everywhere it is positive
  since we can then apply the maximum principle on the set $\{ x \in M
  \mid \phitil_-(x) \geq \gamma^{-\frac{1}{2N}} K_b\}$. The proof is
  then exactly the same as for $\phitil_+$, except that we use the
  inequality
  \begin{equation*}
    \phitil_-^{2N} 
    \leq 
    \left(\phitil_{\infty} - \frac{\epsilon}{2}\right)^{2N} 
    \leq 
    \phitil_{\infty}^{2N} - \left(\frac{\epsilon}{2}\right)^{2N}.
  \end{equation*}
  We conclude that $\phitil_i$ converges to $\phitil_{\infty}$, which
  ends the proof of Lemma \ref{lmUnboundedEnergy}.
\end{proof}

With the results obtained so far we can now prove Theorem \ref{main}.

\begin{proof}[Proof of Theorem \ref{main}]
  Assume that the limit equation \eqref{eqLimit1} admits no non-zero
  solution for any $\al_0 \in (0,1]$. We first show that there exists a
  non-zero solution of the constraint equations. Indeed, for positive
  integers $i$ set $\epsilon_i \definedas 1/i$. From Lemma
  \ref{existence_subcrit}, there exists a solution $(\ph_i, W_i) \in
  W^{2,p}_+ \times W^{2,p}$ of the subcritical constraint equations
  \eqref{hamiltonian_ep}-\eqref{momentum_ep} where $\epsilon =
  \epsilon_i$. If the sequence $\gamma(\ph_i, W_i)$ was unbounded,
  there would exist a non-zero solution of the limit equation
  \eqref{eqLimit1} by Lemma \ref{lmUnboundedEnergy}. This contradicts
  the assumptions of the proposition. Hence, the sequence
  $\gamma(\ph_i, W_i)$ is bounded and Lemma \ref{lmBoundedEnergy}
  asserts that there exists a solution $(\ph_{\infty}, W_{\infty}) \in
  W^{2,p}_+ \times W^{2,p}$ of the constraint equations.

  To prove compactness, let $(\ph_i, W_i)$ be an arbitrary sequence of
  solutions of the conformal constraint equations. Applying Lemma
  \ref{lmUnboundedEnergy} with $\epsilon_i = 0$ we obtain that the
  sequence $\gamma(\ph_i, W_i)$ is bounded. Lemma
  \ref{lmBoundedEnergy} then asserts that a subsequence of $(\ph_i,
  W_i)$ converges. This ends the proof of Theorem \ref{main}.
\end{proof}

\begin{remark} \label{epi=0} We see from the proof of Lemma
  \ref{lmUnboundedEnergy} that $\al_0 = 1$ if the $\ep_i$ are all
  equal to $0$. This implies in particular if the limit equation
  \eqref{eqLimit1} has no non-trivial solution for $\al_0 = 1$, then
  the compactness in the statement of Theorem \ref{main} holds.
\end{remark}

\section{The limit equation: existence and non-existence results}
\label{sec_proof_cor}

In this section, we study solutions of the limit equation
\eqref{eqLimit1}. We first give non-existence results and prove
Corollaries \ref{Cor1}, \ref{Cor2}, and \ref{Cor3}. Then, we derive
some properties leading to the existence result in Proposition
\ref{example_nd}.

\subsection{Non-existence results}

We first prove the Limit equation has no solutions when the Ricci
curvature has a negative upper bound.

\begin{proof}[Proof of Corollary \ref{Cor1}]
  Under the assumption of Corollary \ref{Cor1}, we show that the limit
  equation \eqref{eqLimit1} does not have any non-trivial solution,
  Theorem \ref{main} then implies the existence of solutions to the
  conformal constraint equations. We proceed by contradiction and
  assume that $W \not\equiv 0$ is a solution of
  \begin{equation*}
    -\frac{1}{2} L^*L W 
    = \al_0 \sqrt{\frac{n-1}{n}} |LW| \frac{d\tau}{\tau} .
  \end{equation*}
  for some $\al_0 \in (0,1]$. Taking the scalar product with $W$,
  integrating over $M$, and using the H\"older inequality, we get
  \begin{equation} \label{eqlw2} \frac{1}{2} \int_M |LW|^2 \, dv \leq
    \sqrt{\frac{n-1}{n}} \left\| \frac{d \tau}{\tau}
    \right\|_{L^{\infty}} {\left( \int_M |LW|^2 \, dv
      \right)}^{\frac12} {\left( \int_M |W|^2 \, dv
      \right)}^{\frac12}.
  \end{equation} 
  The conformal Killing operator satisfies the Bochner type formula
  \begin{equation} \label{LW2} \frac{1}{2} \int_M \left|LV\right|^2 \,
    dv = \int_M \left( \left|\nabla V\right|^2 + \left(1 -
        \frac{2}{n}\right) \left|\mathrm{div} V\right|^2 - \Ric(V,V)
    \right) \, dv,
  \end{equation}
  see for example \cite[Chapter 2, \S 6]{Yano}. Since we assume that
  the Ricci curvature satisfies $\Ric \leq - \lambda g$ for some
  $\lambda > 0$, we get
  \begin{equation*}
    \int_M |LW|^2 \, dv \geq 2 \lambda \int_M |W|^2 \, dv.
  \end{equation*}
  Together with \eqref{eqlw2}, we obtain
  \begin{equation*}
    \left\| \frac{d \tau}{\tau} \right\|_{L^{\infty}} 
    \geq 
    \sqrt{\frac{n}{2(n-1)} \lambda}
  \end{equation*}
  which contradicts the assumption of Corollary \ref{Cor1}.
\end{proof}

The next result is that the limit equation has no solution under a
near CMC condition.
\begin{proof}[Proof of Corollary \ref{Cor2}]
  As in the proof of Corollary \ref{Cor1}, we assume that $W$ is a
  solution of Equation \eqref{eqLimit1}.  Taking the scalar product of
  this equation with $W$, integrating over $M$, and using the H\"older
  inequality, we find
  \begin{equation*}
    \frac{1}{2} \int_M |LW|^2 \, dv  \leq  \sqrt{\frac{n-1}{n}} \left\| 
      \frac{d \tau}{\tau}
    \right\|_{ L^n} 
    {\left( \int_M |LW|^2 \, dv \right)}^{\frac12} {\left( \int_M
        |W|^N  \, dv
      \right)}^{\frac{1}{N}}.
  \end{equation*}
  The definition of $C_g$ immediately implies that
  \begin{equation*}
    \left\| 
      \frac{d \tau}{\tau}
    \right\|_{ L^n} \geq  \frac{1}{2} \sqrt{\frac{n}{n-1}} \; C_g, 
  \end{equation*} 
  which contradicts the assumption of Corollary \ref{Cor2}.
\end{proof}

Finally we show that metrics with no solution to the limit equation
are dense in the $C^0$-topology.
\begin{proof}[Proof of Corollary \ref{Cor3}]
  Fix any metric $g$ and set
  \begin{equation*}
    c \definedas
    \left\| \frac{d \tau}{\tau} \right\|_{L^{\infty}(M,g)}^2.
  \end{equation*} 
  By a result of J. Lohkamp \cite[Theorem B]{Lohkamp} the set of
  metrics with a given upper bound on the Ricci curvature is dense in
  the space of all metrics with respect to the $C^0$-topology. Thus we
  can take a sequence of metrics $g_i$ tending to $g$ in $C^0$ with
  the property that
  \begin{equation*}
    \Ric^{g_i} \leq -\lambda g_i
  \end{equation*} 
  for all $i$, where $\lambda \definedas \frac{4 (n-1)}{n}c$.  For $i$
  large enough we have
  \begin{equation*}
    \left\| \frac{d \tau}{\tau} \right\|_{L^{\infty}(M,g_i)}^2 
    \leq 
    \frac{3}{2}c
    < 2 c
    = \frac{n}{2 (n-1)} \lambda.
  \end{equation*}
  By Corollary \ref{Cor1} we obtain the limit equation
  \eqref{eqLimit1} has no non-trivial solutions for all metrics $g_i$
  with $i$ large enough and hence, that the system of equations
  \eqref{hamiltonian}-\eqref{momentum} admits a solution for all
  metrics $g_i$ with $i$ large enough. This proves Corollary
  \ref{Cor3}.
\end{proof}

\subsection{Existence result for the limit equation}

For the proof of Proposition \ref{example_nd} we will use the
following result.

\begin{prop} \label{limit_equations} Let $M$ be a compact $n$-manifold
  equipped with a smooth Riemannian metric $g$. Let also $\tau > 0$ be
  smooth. Then at least one of the following statements is true.
  \begin{itemize}
  \item The limit equation \eqref{eqLimit1} has a non-trivial solution
    for some $\al_0 \in (0,1]$.
  \item For every trace-free and divergence-free smooth $(0,2)$-tensor
    $\sigma \not\equiv 0$ there exists a solution $W \in W^{2,p}$ of
    the equation
    \begin{equation} \label{eqLimit_sigma} -\frac{1}{2} L^*L W =
      \sqrt{\frac{n-1}{n}} |\sigma + LW| \frac{d\tau}{\tau} .
    \end{equation} 
  \end{itemize}
\end{prop}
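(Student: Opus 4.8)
The plan is to run the same subcritical-perturbation and rescaling machinery that proves Theorem~\ref{main}, but now applied to the system \eqref{hamiltonian}-\eqref{momentum} with $\sigma$ \emph{fixed and non-trivial} and $\tau$ replaced by $t\tau$ for a parameter $t>0$ that we send to infinity. Fix $\sigma\not\equiv 0$ trace- and divergence-free. For each $t>0$, the assumptions \eqref{assumptions} hold for the data $(g,t\tau,\sigma)$ (note $\sigma\not\equiv 0$ handles the case $Y(g)\geq 0$, and $t\tau$ still vanishes nowhere). By Proposition~\ref{existence_subcrit} together with Lemmas~\ref{lmBoundedEnergy} and~\ref{lmUnboundedEnergy} — equivalently by Theorem~\ref{main} itself — for each $t$ either the conformal constraint equations with data $(g,t\tau,\sigma)$ have a solution $(\phi_t,W_t)$, or the limit equation \eqref{eqLimit1} has a non-trivial solution for some $\al_0\in(0,1]$. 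If the latter happens for \emph{any} $t>0$ we are in the first alternative of the Proposition and we are done (the limit equation for $t\tau$ is, after scaling $W$, literally the limit equation \eqref{eqLimit1} for $\tau$, since $d(t\tau)/(t\tau)=d\tau/\tau$). So we may assume that for every $t>0$ there is a genuine solution $(\phi_t,W_t)\in W^{2,p}_+\times W^{2,p}$ of \eqref{hamiltonian}-\eqref{momentum} with $\tau$ replaced by $t\tau$.

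Next I would let $t\to\infty$ and extract a limit. Write $\gamma_t\definedas\int_M|LW_t|^2\,dv$ and $\widetilde\gamma_t\definedas\max\{\gamma_t,1\}$, and rescale exactly as in the proof of Proposition~\ref{propBoundPhi}: $\widetilde\phi_t\definedas\widetilde\gamma_t^{-1/(2N)}\phi_t$, $\widetilde W_t\definedas\widetilde\gamma_t^{-1/2}W_t$, $\widetilde\sigma_t\definedas\widetilde\gamma_t^{-1/2}\sigma$. The rescaled equations are \eqref{hamiltonian_ep_rescaled}-\eqref{momentum_ep_rescaled} with $\ep=0$ and with $\tau$ replaced by $t\tau$, i.e.\ the vector equation reads $-\frac12 L^*L\widetilde W_t=\frac{n-1}{n}t\,\widetilde\phi_t^{N}\,d\tau$, and $\int_M|L\widetilde W_t|^2\,dv\leq 1$. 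The key dichotomy is whether $t^2/\widetilde\gamma_t^{1/n}$ stays bounded or not, but the cleaner organizing quantity is $\beta_t\definedas t\,\widetilde\gamma_t^{-1/(2N)}\cdot(\text{appropriate power})$; concretely, multiply the rescaled Lichnerowicz equation by $\widetilde\phi_t^{N+1}$, integrate, use that the gradient term is non-negative and the $R$-term is controlled as in Step~\ref{stBound1}, to get $\frac{n-1}{n}\tau_0^2\int_M\widetilde\phi_t^{2N}\,dv\leq\int_M|\widetilde\sigma_t|^2\,dv+1+o(1)$, hence $\|\widetilde\phi_t\|_{2N}$ is bounded; then one can run the induction step of Proposition~\ref{propBoundPhi} verbatim (the coefficient $\frac{1}{\widetilde\gamma_t^{1/n}}$ in front of the Laplacian only helps, and the factor $t$ on the right of the vector equation is harmless once combined with the bound on $\widetilde\phi_t$ through $\|L\widetilde W_t\|_{L^2}\leq 1$ — actually here I would instead bootstrap directly from $\|L\widetilde W_t\|_{L^2}\leq 1$ using elliptic regularity for $L^*L$ as in Lemma~\ref{lmUnboundedEnergy}, which is $t$-independent). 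This gives a uniform $L^\infty$-bound on $\widetilde\phi_t$ and, via elliptic estimates for $L^*L$ and the $W^{2,p}$-bound coming from $\int_M|L\widetilde W_t|^2\leq 1$, a uniform $W^{2,p}$-bound on $\widetilde W_t$. Extract a subsequence with $\widetilde W_t\to\widetilde W_\infty$ in $C^1$ and $\widetilde\phi_t\to\widetilde\phi_\infty$ in $L^\infty$.

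The outcome then splits according to the behaviour of the product $\mu_t\definedas t\,\widetilde\gamma_t^{-1/(2N)}$ appearing in the rescaled vector equation $-\frac12 L^*L\widetilde W_t=\frac{n-1}{n}\mu_t\,\widetilde\phi_t^{N}\,d\tau$. \textbf{Case 1:} $\mu_t$ stays bounded along the subsequence. Then in the limit $-\frac12 L^*L\widetilde W_\infty=\frac{n-1}{n}\mu_\infty\widetilde\phi_\infty^N\,d\tau$, and passing to the limit in the rescaled Lichnerowicz equation (whose $\Delta\widetilde\phi$ and $R\widetilde\phi$ terms, and whose $\widetilde\sigma_t$ term, all tend to zero exactly as argued in Lemma~\ref{lmUnboundedEnergy}, \emph{provided} $\widetilde\gamma_t\to\infty$) gives $\frac{n-1}{n}\tau^2\widetilde\phi_\infty^{N-1}=|L\widetilde W_\infty|\widetilde\phi_\infty^{-N-1}$, hence $\widetilde\phi_\infty^N=\sqrt{\frac n{n-1}}\,\tau^{-1}|L\widetilde W_\infty|$ and substituting back yields $-\frac12 L^*L\widetilde W_\infty=\mu_\infty\sqrt{\frac{n-1}{n}}\,|L\widetilde W_\infty|\frac{d\tau}{\tau}$; since $\int_M|L\widetilde W_\infty|^2=1$ we have $\widetilde W_\infty\not\equiv 0$, and $\mu_\infty\in[0,1]$ with $\mu_\infty\neq 0$ forced by the no-conformal-Killing-field assumption — this is the first alternative of the Proposition. (One must check $\widetilde\gamma_t\to\infty$ in this case: if instead $\gamma_t$ were bounded, then $W_t$ is bounded and the unrescaled equations survive the limit $t\to\infty$, but the vector equation $-\frac12 L^*L W_\infty=\frac{n-1}{n}(\lim t)\phi_\infty^N\,d\tau$ forces $\phi_\infty\equiv 0$, contradicting $\phi_t>0$ together with the lower bound from Lemma~\ref{supinf} — so actually $\gamma_t\to\infty$ always, which is the expected main technical point.) \textbf{Case 2:} $\mu_t\to\infty$. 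Then rescale once more, $\widehat W_t\definedas\mu_t\widetilde W_t$ won't have bounded energy; instead set $\widehat\sigma_t\definedas\mu_t^{-1}\widetilde\sigma_t\to 0$ if needed, and observe that the vector equation reads $-\frac12 L^*L\widetilde W_t=\frac{n-1}{n}\mu_t\widetilde\phi_t^N\,d\tau$ with left side bounded in $W^{0,p}$ (from $\int|L\widetilde W_t|^2\leq 1$ and elliptic regularity), so $\mu_t\widetilde\phi_t^N$ is bounded in $L^p$, forcing $\widetilde\phi_\infty\equiv 0$; then the limiting Lichnerowicz relation degenerates and one gets $|L\widetilde W_\infty|=\lim\frac{n-1}{n}\tau^2\widetilde\phi_t^{N-1}\widetilde\phi_t^{N+1}/|\ldots|$... the correct way to see it: from $\frac{n-1}{n}\tau^2\widetilde\phi_t^{2N}\leq|\widetilde\sigma_t+L\widetilde W_t|^2\cdot(1+o(1))$ (maximum principle argument at the max of $\widetilde\phi_t$, valid since the $\frac1{\widetilde\gamma_t^{1/n}}R\widetilde\phi_t$ term is $\leq 0$-controllable) one deduces $\|\widetilde\phi_t\|_\infty\to 0$ is incompatible with $\mu_t\to\infty$ unless $\mu_t\widetilde\phi_t^N\to |\sigma|$-type profile; cleaner: rescale $W$ and $\phi$ by $\mu_t$, i.e.\ examine whether the genuine (unrescaled-by-$\widetilde\gamma$) solution of \eqref{eqLimit_sigma}-type appears. \emph{In fact the slick route for Case 2 is:} if $\mu_t$ is bounded away from $0$ and $\infty$ we land in Case~1; if $\mu_t\to\infty$ take $\widehat\phi_t\definedas\mu_t^{1/N}\widetilde\phi_t$... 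I will instead phrase the whole argument so that the \textbf{second alternative} emerges precisely when the parameter family $(g,t\tau,\sigma)$ stays in the ``solvable'' regime \emph{and} $\gamma_t$ grows at the borderline rate making $\mu_t\to 1$, giving in the limit $-\frac12 L^*L\widetilde W_\infty=\sqrt{\frac{n-1}{n}}|\widetilde\sigma_\infty+L\widetilde W_\infty|\frac{d\tau}{\tau}$ with $\widetilde\sigma_\infty\neq 0$ — which is \eqref{eqLimit_sigma} after renaming. The main obstacle is exactly this careful bookkeeping of the three competing scales ($\widetilde\gamma_t$, $t$, and $\|\sigma\|$) to guarantee that when the limit equation \eqref{eqLimit1} fails to produce a solution, the surviving limit is the $\sigma$-deformed equation \eqref{eqLimit_sigma} with $\widetilde\sigma_\infty\not\equiv 0$ rather than a degenerate equation; everything else is a routine repetition of Proposition~\ref{propBoundPhi} and Lemma~\ref{lmUnboundedEnergy}.
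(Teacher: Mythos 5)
There is a genuine gap, and it sits exactly where the second alternative \eqref{eqLimit_sigma} is supposed to appear. Your one-parameter family is $(g,t\tau,\sigma)$ with $t\to\infty$, and you rescale by the solution-dependent energy $\widetilde\gamma_t=\max\bigl\{\int_M|LW_t|^2\,dv,1\bigr\}$, so that $\sigma$ is replaced by $\widetilde\sigma_t=\widetilde\gamma_t^{-1/2}\sigma$. This rescaling can never produce \eqref{eqLimit_sigma}: whenever the blow-up analysis of Lemma \ref{lmUnboundedEnergy} applies (i.e.\ $\widetilde\gamma_t\to\infty$), the tensor $\widetilde\sigma_t$ tends to zero and only the $\sigma$-free equation \eqref{eqLimit1} can survive; while if $\widetilde\gamma_t$ stays bounded, nothing is suppressed at all --- the rescaled Lichnerowicz equation keeps the full $\frac{4(n-1)}{n-2}\Delta+R$ part and acquires the divergent factor $t^2$ in front of $\tau^2$, and the rescaled vector equation carries the bare factor $t$ (your $\mu_t=t\,\widetilde\gamma_t^{-1/(2N)}$ is an algebra slip: dividing $-\frac12L^*LW_t=\frac{n-1}{n}t\,\phi_t^N\,d\tau$ by $\widetilde\gamma_t^{1/2}$ gives coefficient $t$, as you in fact wrote earlier in the same paragraph). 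Your Case 2 is accordingly inconclusive, and you concede as much (``the main obstacle is exactly this careful bookkeeping''); but that bookkeeping \emph{is} the proof. Two auxiliary claims are also unjustified: the parenthetical argument that $\gamma_t\to\infty$ always misuses Lemma \ref{supinf}, whose lower bound $K_b$ depends on the data, here on $t\tau$ --- the expected behaviour is $\phi_t\sim t^{-1/N}\to0$, so no contradiction arises, and bounded $\gamma_t$ is precisely the regime in which \eqref{eqLimit_sigma} should emerge; and in your Case 1 nothing forces $\mu_\infty\le1$, so a limit of that form with $\al_0>1$ would not be the first alternative as stated.

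The paper avoids all of this by scaling $\sigma$ rather than $\tau$: assuming \eqref{eqLimit1} has no non-trivial solution, Theorem \ref{main} yields solutions $(\phi_a,W_a)$ for the data $(g,\tau,a\sigma)$ for every $a>0$, and one rescales by the explicit, deterministic factor $\gamma_a=a^2$ (not by the solution energy), so that the rescaled tensor in \eqref{hamiltoniana_rescaled}--\eqref{momentuma_rescaled} is exactly $\sigma$, the $\Delta+R$ part carries $\gamma_a^{-1/n}\to0$, and $\tau$ is untouched. Only two clean cases then occur: if $\int_M|L\widetilde{W}_a|^2\,dv$ is unbounded, a second rescaling produces a non-trivial solution of \eqref{eqLimit1}, contradicting the assumption; if it is bounded, the argument of Lemma \ref{lmUnboundedEnergy} goes through with the $\sigma$-term intact (since $\gamma_a\to\infty$ by construction), and Remark \ref{epi=0} gives $\al_0=1$, i.e.\ \eqref{eqLimit_sigma}. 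Your idea of sending $t\to\infty$ in $t\tau$ could probably be salvaged --- formally $t\phi_t^N\to\sqrt{n/(n-1)}\,\tau^{-1}|\sigma+LW_\infty|$, and the natural rescaling is $\psi_t=t^{1/N}\phi_t$ with $W_t$ and $\sigma$ left untouched --- but then the a priori estimates of Proposition \ref{propBoundPhi} would have to be redone with the $t$-dependence tracked, which is precisely the work your proposal defers.
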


The assumptions on the regularity of $g$, $\tau$, $\sigma$ in
Proposition \ref{limit_equations} could be weakened. To simplify the
proof we choose to assume smoothness of the data, which is enough for
the example we want to find.

\begin{proof}
  Assume that the limit equation \eqref{eqLimit1} has no solution and
  take a trace-free and divergence-free $(0,2)$-tensor $\sigma
  \not\equiv 0$. We need to prove that \eqref{eqLimit_sigma} then has
  a solution. Theorem \ref{main} implies that for all $a>0$ the
  conformal constraint equations for $(g,\tau, a \sigma)$ admit a
  solution. So, there exists $(\phi_a,W_a)$ with $\phi_a >0$ such that
  \begin{subequations}
    \begin{align}
      \frac{4(n-1)}{n-2} \Delta \phi_a + R \phi_a &= -\frac{n-1}{n}
      \tau^2 \phi_a^{N-1} + |a \sigma + LW_a|^2 \phi_a^{-N-1},
      \label{hamiltoniana}   \\
      -\frac{1}{2} L^* L W_a &= \frac{n-1}{n} \phi_a^N d\tau.
      \label{momentuma}
    \end{align}
  \end{subequations}
  We will obtain a solution of \eqref{eqLimit_sigma} by letting $a$
  tend to $+\infty$. We set $\gamma_a \definedas a^2$ and rescale
  $\phi_a$ and $W_a$ as
  \begin{equation*}
    \phitil_a \definedas \gamma_a^{-\frac{1}{2N}} \ph_a, \quad
    \Wtil_a \definedas \gamma_a^{-\frac{1}{2}} W_a. 
  \end{equation*}
  Equations \eqref{hamiltoniana}-\eqref{momentuma} can be rewritten as
  \begin{subequations}
    \begin{align}
      \frac{1}{\gamma_a^{\frac{1}{n}}} \left( \frac{4(n-1)}{n-2}
        \Delta \phitil_a + R \phitil_a \right) &= -\frac{n-1}{n}
      \tau^2 \phitil_a^{N-1} + |\sigma + L\Wtil_a|^2 \phitil_a^{-N-1},
      \label{hamiltoniana_rescaled}
      \\
      -\frac{1}{2} L^* L \Wtil_a &= \frac{n-1}{n} \phitil_a^N d\tau .
      \label{momentuma_rescaled}
    \end{align}
  \end{subequations}
  We divide the argument in two cases.

  \begin{case} Assume that
    \begin{equation*}
      \limsup_{a\to \infty} \int_M \left|L\Wtil_a\right|^2 \, dv  
      = + \infty. 
    \end{equation*}
  \end{case}
  This situation is very similar to the situation in the proof of
  Theorem \ref{main} when the energy $\gamma(\phi,W)$ is unbounded.
  Setting
  \begin{equation*}
    \gamma'_a \definedas 
    \max \left\{ \int_M \left|L\Wtil_a\right|^2 \, dv, 1 \right\}
  \end{equation*}
  and defining $\Wtil'_a \definedas (\gamma'_a)^{-\frac{1}{2}}
  \Wtil_a$, one sees that $\Wtil'_a$ converges to a non-trivial
  solution of the limit equation \eqref{eqLimit1}. Since this equation
  is assumed to have no solution except $0$, the first case cannot
  occur.

  \begin{case} Assume that
    \begin{equation*}
      \limsup_{a\to \infty} \int_M \left|L\Wtil_a\right|^2 \, dv
      < +\infty.
    \end{equation*}
  \end{case}
  Since $\gamma_a$ tends to $+\infty$, the situation is again similar
  to the situation in the proof of Theorem \ref{main} when the energy
  $\gamma(\phi,W)$ is unbounded. The difference here is that the
  $\sigma$-term does not vanish in the limit. So, mimicking the proof
  of Theorem \ref{main}, one sees that $\Wtil_a$ tends to a solution
  of Equation \eqref{eqLimit_sigma}. Note that in Equation
  \eqref{eqLimit_sigma} should appear some $\al_0 \in (0,1]$, but
  Remark \ref{epi=0} implies that $\al_0=1$.
\end{proof}

We can now give an example showing that non-existence of solutions of
the limit equation is a property which is not open in the
$C^2$-topology.

\begin{proof}[Proof of Proposition \ref{example_nd}] 
  On $S^n = \{(x_1,\dots,x_{n+1}) \in \mR^{n+1} \mid \sum_{i=1}^{n+1}
  x_1^2 = 1 \}$, we choose a fixed open neighborhood $V$ of the north
  pole $N \definedas (1,0,\dots,0)$ and the south pole $S\definedas
  (-1,0,\dots,0)$. We choose a metric $g$ which has no local conformal
  Killing field on $M \setminus V$ and coincides with the standard
  round metric in a neighborhood of the points $N$ and $S$. The
  results in \cite{BCS05} provide the existence of such a metric. Set
  $\tau \definedas e^{x_1}$. Note that $\tau > 0$ and that the
  gradient of $\ln(\tau)$ is a conformal Killing field of $S^n$
  equipped with its standard round metric. Further, the critical
  points of $\tau$ are $N$ and $S$. By construction of the metric $g$
  we have $L( d\ln(\tau) ) = L \left( d \tau / \tau \right) = 0$ on a
  neighborhood of $N$ and $S$. This implies that there exists a
  constant $C >0$ such that for all $x \in S^n$
  \begin{equation} \label{taultau} \left| L \left( \frac{d \tau}{\tau}
      \right)\right| \leq C \left| \frac{d\tau}{\tau} \right|^2
  \end{equation}
  everywhere on $S^n$.

  We will show that $g$ together with $\tau^a$ for $a>0$ large enough
  satisfy the conclusion of Proposition \ref{example_nd}. To do this
  we proceed by contradiction and assume that for every $a >0$, the limit
  equation \eqref{eqLimit1} has no non-trivial solution for any 
 $\al_0 \in (0,1]$. By
  Proposition \ref{limit_equations}, this means that for all $a>0$ and for 
  all trace-free and divergence-free $(0,2)$-tensor
  $\sigma_a$ there exists a non-trivial solution $W_a \in W^{2,p}$ of
  \begin{equation} \label{eqa} -\frac{1}{2} L^* L W_a =
    \sqrt{\frac{n-1}{n}} | \sigma_a + LW_a| \frac{d
      \tau^a}{\tau^a} = \sqrt{\frac{n-1}{n}} a | \sigma_a + LW_a|
    \frac{d \tau}{\tau}
  \end{equation}  

  Now choose a fixed non-zero  trace-free and divergence-free
  $(0,2)$-tensors $\sigma$ supported in $M \setminus V$. 
For the existence of $\sigma$
  with such properties, see for example \cite{Delay}.  We choose
  $\sigma_a = a^{-1} \sigma$ and get from Equation \eqref{eqa} a
  non-trivial solution of
  \begin{equation*}
    -\frac{1}{2} L^* L W_a 
    = \sqrt{\frac{n-1}{n}} |\sigma + a LW_a| \frac{d \tau}{\tau}. 
  \end{equation*} 
  Take the scalar product of this equation with $d\tau/\tau$ and
  integrate. This gives
  \begin{equation*}
    \sqrt{\frac{n-1}{n}} 
    \int_{S^n} |\sigma +a LW_a| 
    \left|\frac{d \tau}{\tau} \right|_g^2 \, dv^{g} 
    = 
    - \frac{1}{2}\int_{S^n} \< LW_a, L( d \tau /\tau) \>  
    \, dv^{g}.
\end{equation*} 
  Together with Inequality
  \eqref{taultau},  we obtain
  \begin{equation} \label{ineqmain} \int_{S^n} |\sigma +a LW_a|
    \left|\frac{d \tau}{\tau}\right|^2 \, dv^{g} \leq C
    \int_{S^n} \left|\frac{d \tau}{\tau}\right|^2 |LW_a| \,
    dv^{g},
  \end{equation}
  with maybe a new value of $C$. Since $|\sigma + a LW_a| \geq
  a|LW_a| - |\sigma|$ we conclude from Inequality \eqref{ineqmain}
  that
  \begin{equation*}
    (a - C)  \int_{S^n}  
    \left|\frac{d \tau}{\tau}\right|_g^2 |LW_a| \, dv^{g}
    \leq 
    \int_{S^n} |\sigma|  
    \left|\frac{d \tau}{\tau}\right|_g^2 \, dv^{g}. 
  \end{equation*}  
 The right hand side here is bounded which shows that
  \begin{equation*}
    \lim_{a\to + \infty}  \int_{S^n}  
    \left| \frac{d \tau}{\tau} \right|_g^2 |LW_a| \, dv^{g} = 0,
  \end{equation*} 
  Inequality \eqref{ineqmain} then tells us that
  \begin{equation} \label{limlhs1} \lim_{a \to \infty} \int_{S^n}
    |\sigma +a LW_a| \left|\frac{d \tau}{\tau}\right|_g^2 \,
    dv^{g} = 0.
  \end{equation} 
  Since the critical set of $\tau$ is exactly $N$ and $S$ we have
  \begin{equation*}
    \left|\frac{d \tau}{\tau}\right|_g^2 \geq \epsilon > 0
  \end{equation*} 
  on $S^n \setminus V$. From \eqref{limlhs1} we obtain
  \begin{equation} \label{limlhs2} \lim_{a \to \infty} \int_{S^n
      \setminus V} |\sigma +a LW_a| \, dv^{g} = 0.
  \end{equation} 
  Since $\sigma$ is supported in $S^n \setminus V$, we have
  \begin{equation*}
    \left| \int_{S^n} \< \sigma, \sigma + a LW_a \> \, dv^{g}\right| 
    \leq 
    \left\|\sigma\right\|_{L^\infty} \int_{S^n \setminus V} |\sigma + a LW_a| \, dv^{g}. 
  \end{equation*} 
  Together with \eqref{limlhs2}, this shows that
  \begin{equation*}
    \liminf_{a \to \infty} \int_{S^n} \< \sigma, a LW_a \> \, dv^{g} 
    = 
    - \liminf_{a \to \infty} \int_{S^n} |\sigma|^2 \, dv^g.
  \end{equation*} 
  We get that
$$\liminf_{a \to \infty} \int_{S^n} \< \sigma, a LW_a \> \, dv^{g} 
\neq 0.$$ However, since $\sigma$ is divergence-free, we must have
\begin{equation*}
  \int_{S^n} \< \sigma, a LW_a \> \, dv^{g} = 0
\end{equation*} 
for all $a$, which gives the desired contradiction.
\end{proof}

\section{Proof of Theorem \ref{nonexist} }
\label{sec_proof_nonexist}

\begin{proof}[Proof of Theorem \ref{nonexist}]
  We proceed by contradiction and assume that the system
  \eqref{hamiltonian}-\eqref{momentum} admits a solution $(\phi, W)$
  where $\phi > 0$. Note that
  \begin{equation*}
    \int_M \phi^{N+1} \Delta \phi \, dv 
    = 
    \frac{N+1}{\left(\frac{N}{2}+1\right)^2} 
    \int_M \left|d \phi^{\frac{N}{2}+1}\right|^2 \, dv.
  \end{equation*}
  Hence, multiplying the Lichnerowicz equation \eqref{hamiltonian} by
  $\phi^{N+1}$ and integrating over $M$ we get
  \begin{equation*}
    \begin{split}
      &\frac{4(n-1)}{n-2} \frac{N+1}{\left(\frac{N}{2}+1\right)^2}
      \int_M \left|d \phi^{\frac{N}{2}+1}\right|^2 \, dv + \int_M R
      \phi^{N+2} \, dv +
      \frac{n-1}{n} \int_M \tau^2 \phi^{2N} \, dv \\
      &\quad = \int_M |LW|^2 \, dv,
    \end{split}
  \end{equation*}
  since $\si \equiv 0$. By assumption the sum of the first two terms
  is non-negative, and we obtain
  \begin{equation} \label{inegforphi} \frac{n-1}{n} \int_M \tau^2
    \phi^{2N} \, dv \leq \int_M |LW|^2 \, dv.
  \end{equation} 
  Next, we take the scalar product of the vector equation
  \eqref{momentum} with $W$ and integrate over $M$. Using the H\"older
  inequality we get
  \begin{equation*}
    \begin{split} 
      \frac{1}{2} \int_M |LW|^2 \, dv &=
      \frac{n-1}{n} \int_M \phi^N \langle d\tau , W \rangle \, dv \\
      &\leq \frac{n-1}{n} {\left( \int_M \tau^2 \phi^{2N} \, dv
        \right)}^{1/2} {\left( \int_M \left|\frac{d\tau}{\tau}
          \right|^n \, dv \right)}^{\frac{1}{n}} {\left( \int_M |W|^N
          \, dv \right)}^{\frac{1}{N}} .
    \end{split}
  \end{equation*}
  Using \eqref{inegforphi} and the definition of $C_g$ we obtain
  \begin{equation*}
    \frac{1}{2} C_g 
    \leq 
    \frac{1}{2} \frac{{\left( \int_M |LW|^2 \, dv \right)}^{\frac{1}{2}}}
    {{\left( \int_ M |W|^N \, dv \right)}^{\frac{1}{N}}}  
    \leq 
    \sqrt{\frac{n-1}{n}}
    \left\| \frac{d \tau}{\tau} \right\|_{L^n},
  \end{equation*}
  since $W \not\equiv 0$. This contradicts the assumption and proves
  Theorem \ref{nonexist}.
\end{proof}

\appendix
\section{Positivity of $C_g$}

\begin{lemma} \label{lem_Cg_positive} Suppose that $(M,g)$ has no
  conformal Killing vector fields, then $C_g > 0$.
\end{lemma}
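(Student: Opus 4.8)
The plan is to argue by contradiction using a standard compactness/minimizing-sequence argument together with the elliptic estimate for the operator $L^*L$. Suppose $C_g = 0$. Then there exists a sequence of smooth $1$-forms $V_i \not\equiv 0$ with $\int_M |V_i|^N \, dv = 1$ and $\int_M |LV_i|^2 \, dv \to 0$. The first step is to extract a weak limit: since $N = \frac{2n}{n-2}$, the sequence $V_i$ is bounded in $L^N$, but to control the full $W^{1,2}$-norm I will use the Bochner-type identity already quoted in \eqref{LW2},
\begin{equation*}
  \frac{1}{2} \int_M |LV_i|^2 \, dv
  =
  \int_M \left( |\nabla V_i|^2 + \Bigl(1 - \tfrac{2}{n}\Bigr)|\mathrm{div}\, V_i|^2 - \Ric(V_i, V_i) \right) \, dv,
\end{equation*}
which gives $\int_M |\nabla V_i|^2 \, dv \leq \frac12\int_M|LV_i|^2\,dv + \int_M \Ric(V_i,V_i)\,dv \leq C\bigl(\|LV_i\|_{L^2}^2 + \|V_i\|_{L^2}^2\bigr)$. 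Combined with the Sobolev embedding $W^{1,2} \hookrightarrow L^N$ and the normalization $\|V_i\|_{L^N} = 1$, this shows $V_i$ is bounded in $W^{1,2}$.

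Next I pass to a subsequence converging weakly in $W^{1,2}$ and strongly in $L^q$ for every $q < N$ (Rellich--Kondrakov) to some limit $V_\infty$. Lower semicontinuity of the $L^2$-norm of the derivative under weak convergence gives $\int_M |LV_\infty|^2 \, dv \leq \liminf \int_M |LV_i|^2 \, dv = 0$, hence $LV_\infty = 0$, i.e. $V_\infty$ corresponds to a conformal Killing vector field, so by hypothesis $V_\infty \equiv 0$. The main obstacle is then to rule this out by showing the normalization survives the limit: a priori $\|V_i\|_{L^N} = 1$ could be lost because $L^N$-convergence is only weak. To close the gap I will upgrade the convergence using elliptic regularity for $L^*L$: from $-\tfrac12 L^*L V_i = -\tfrac12 L^*L V_i$ and the bound $\|LV_i\|_{L^2} \to 0$ one controls $\|L^*L V_i\|$ in a negative Sobolev space, and since $L^*L$ is elliptic with kernel the (trivial) space of conformal Killing forms, standard estimates give $\|V_i\|_{W^{1,2}} \leq C\bigl(\|L^*L V_i\|_{W^{-1,2}} + \|V_i\|_{L^2}\bigr) \leq C\bigl(\|LV_i\|_{L^2} + \|V_i\|_{L^2}\bigr)$. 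Thus $\|V_i - V_j\|_{W^{1,2}}$ is controlled by $\|LV_i\|_{L^2} + \|LV_j\|_{L^2} + \|V_i - V_j\|_{L^2}$; since the last term goes to zero along the strongly convergent subsequence (in $L^2 \subset L^q$, $q<N$), the sequence is Cauchy in $W^{1,2}$, hence converges strongly in $W^{1,2}$ and therefore in $L^N$. This forces $\|V_\infty\|_{L^N} = 1$, contradicting $V_\infty \equiv 0$. Therefore $C_g > 0$.

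Alternatively, and perhaps more cleanly, one can avoid the negative-Sobolev bookkeeping: since $V_i$ is bounded in $W^{1,2}$ and $\nabla V_i$ is bounded in $L^2$, elliptic estimates for the overdetermined-elliptic operator $L$ (equivalently $L^*L$) modulo its finite-dimensional kernel give $\inf_{\xi \in \ker L} \|V_i - \xi\|_{W^{1,2}} \leq C\|LV_i\|_{L^2} \to 0$; but $\ker L = \{0\}$ by assumption, so $\|V_i\|_{W^{1,2}} \to 0$, hence $\|V_i\|_{L^N} \to 0$ by Sobolev embedding, contradicting the normalization directly. This is the version I would write up, as it is the shortest; the only point requiring care is the Poincaré-type estimate $\|V\|_{W^{1,2}} \leq C\|LV\|_{L^2}$ valid precisely because $\ker L = 0$, which itself follows from the elliptic estimate $\|V\|_{W^{1,2}} \leq C(\|LV\|_{L^2} + \|V\|_{L^2})$ combined with a standard compactness argument removing the lower-order term.
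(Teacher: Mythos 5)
Your proposal is correct and follows essentially the same route as the paper: argue by contradiction with an $L^N$-normalized sequence, obtain a uniform $W^{1,2}$-bound from the Bochner formula \eqref{LW2} (note that the needed $L^2$-bound comes from H\"older and the finite volume of $M$, not from the Sobolev embedding $W^{1,2}\hookrightarrow L^N$), apply Rellich compactness, and use $\ker L=0$. The only real difference is the endgame: you upgrade to strong $W^{1,2}$ (hence $L^N$) convergence so that the normalization passes to the limit and contradicts $V_\infty\equiv 0$, whereas the paper exhibits the weak limit as a nonzero distributional solution of $L^*LV_\infty=0$; your variant is in fact slightly more careful at the one point the paper treats tersely, namely why the limit cannot vanish when the normalization is taken in $L^N$ rather than $L^2$.
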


\begin{proof}
  Since $M$ is compact there is a constant $\lambda$ such that $\Ric
  \leq \lambda g$. From Formula \eqref{LW2} we get
  \begin{equation*}
    \frac{1}{2} \int_M \left|LV\right|^2 \, dv 
    \geq 
    \int_M \left| \nabla V \right|^2 \, dv 
    - 
    \lambda \int_M \left|V\right|^2 \, dv
  \end{equation*}
  and
  \begin{equation}\label{ineg1}
    \frac{1}{2} \int_M \left|LV\right|^2 \, dv 
    \geq 
    \left\| V \right\|_{H^1}^2 
    - (\lambda + 1) \left\| V \right\|_{L^2}^2. 
  \end{equation}
  We argue by contradiction and assume that for all positive integers
  $k$ there exists $V_k \in H^1$ such that $\left\| V_k \right\|_{L^N}
  = 1$ and $\frac{1}{2} \int_M \left| LV_k \right|^2 \, dv \leq 1/k $.
  Since $M$ is compact it follows that $V_k$ is uniformly bounded in
  $L^2$, and {}from Equation \eqref{ineg1} we conclude that the
  sequence $V_k$ is uniformly bounded in $H^1$. By the compactness of
  the embedding $H^1 \to L^2$, we can assume that the sequence $V_k$
  converges to some $V_{\infty}$ in the $L^2$-norm.  By continuity of
  the norm $V_{\infty}$ satisfies $\left\|V_{\infty} \right\|_{L^2} =
  1$, and in particular, $V_{\infty} \not\equiv 0$.  For any smooth
  1-form $\xi$ we have
  \begin{equation*}
    \begin{split}
      \frac{1}{2} \left| \int_M \langle L^*L \xi, V_{\infty}\rangle \,
        dv \right| &= \frac{1}{2} \lim_{k \to \infty} \left| \int_M
        \langle L^*L \xi, V_k\rangle \, dv
      \right| \\
      &= \frac{1}{2} \lim_{k \to \infty} \left| \int_M \langle L \xi,
        L V_k \rangle \, dv
      \right| \\
      &\leq \lim_{k \to \infty} \left(\frac{1}{2} \int_M |L \xi|^2 \,
        dv \right)^{\frac{1}{2}}
      \left(\frac{1}{2} \int_M |L V_k|^2 \, dv \right)^{\frac{1}{2}} \\
      &= 0.
    \end{split}
  \end{equation*}
  Hence $V_{\infty}$ satisfies the equation $L^* L V_{\infty} = 0$ in
  the sense of distributions, so is a nonzero conformal Killing vector
  field. This is a contradiction which proves that $C_g > 0$.
\end{proof}

\providecommand{\bysame}{\leavevmode\hbox to3em{\hrulefill}\thinspace}
\providecommand{\MR}{\relax\ifhmode\unskip\space\fi MR }
\providecommand{\MRhref}[2]{%
  \href{http://www.ams.org/mathscinet-getitem?mr=#1}{#2}
}
\providecommand{\href}[2]{#2}


\end{document}